\DeclareMathOperator{\polylog}{polylog}
  \providecommand\BibTeX{{%
    \normalfont B\kern-0.5em{\scshape i\kern-0.25em b}\kern-0.8em\TeX}}}
\begin{document}
  
\title{How fast can you update your MST? \\ (Dynamic algorithms for cluster computing)}
   
\author{Seth Gilbert}
\affiliation{%
  \institution{National University of Singapore}
  \city{Singapore}
}
\email{seth.gilbert@comp.nus.edu.sg}

\author{Lawrence Li Er Lu}
\affiliation{%
  \institution{National University of Singapore}
  \country{Singapore}}
\email{lawrence.li@u.nus.edu}

\begin{abstract}
    Imagine a large graph that is being processed by a cluster of computers, e.g., described by the $k$-machine model or the Massively Parallel Computation Model.  The graph, however, is not static; instead it is receiving a constant stream of updates.  How fast can the cluster process the stream of updates?  The fundamental question we want to ask in this paper is whether we can update the graph fast enough to keep up with the stream.
    
    We focus specifically on the problem of maintaining a minimum spanning tree (MST), and we give an algorithm for the $k$-machine model that can process $O(k)$ graph updates per $O(1)$ rounds with high probability.  (And these results carry over to the Massively Parallel Computation (MPC) model.) We also show a lower bound, i.e., it is impossible to process $k^{1+\epsilon}$ updates in $O(1)$ rounds.   Thus we provide a nearly tight answer to the question of how fast a cluster can respond to a stream of graph modifications while maintaining an MST.
\end{abstract}

\maketitle

\begin{CCSXML}
<concept_id>10003752.10003809.10010172.10003817</concept_id>
<concept_desc>Theory of computation~MapReduce algorithms</concept_desc>
<concept_significance>500</concept_significance>
</concept>
</ccs2012>
\end{CCSXML}

\ccsdesc[500]{Theory of computation~Dynamic graph algorithms}
\ccsdesc[500]{Theory of computation~MapReduce algorithms}

\keywords{dynamic MST, k-machine}

\section{Introduction}

There are two different approaches for dealing with very large graphs depending on whether the graph is static or dynamic.  

If the graph is static, it can be distributed on a cluster of machines that can then process the graph.  In the distributed algorithms world, this might be represented using the $k$-machine model~\cite{klauck15}, where the graph is randomly distributed among $k$ servers, each of which can send $\log{n}$ bits to each of the other machines in each communication round.  Alternatively, this might be represented using the Massively Parallel Computation (MPC) model~\cite{karloff10}, which is designed to capture the performance of Map-Reduce systems.  In both cases, you can efficiently find the minimum spanning tree (MST) of a graph~\cite{ghaffari_choo,klauck15}.

If the graph is dynamic, a different approach is used.  Most of the research has focused on storing as little information about the graph as possible, and treating the updates to the graph as a stream of updates. In this case, graph sketches can be used to store an approximate minimum spanning tree in $O(n \polylog(n))$ space, processing updates to the edges as they arrive, one at a time~\cite{ahn12}.

The question we ask in this paper is whether these two approaches can be combined.  How fast can a distributed cluster, e.g., a $k$-machine system, process a stream of updates to a minimum spanning tree.  Recently, Italiano et al.~\cite{italiano19} gave a first answer to this question: they showed how to maintain an \emph{approximate} MST, handling each individual update in $O(1)$ rounds.  This raised the natural question: can we maintain an exact MST, and if so, what is the fastest rate of updates that we can handle?

The main result of this paper, then, is an algorithm for maintaining an exact MST, where the graph is distributed among $k$ servers and the cluster receives a stream of updates to the graph, adding and deleting edges. Moreover, our algorithm can handle up to $\Theta(k)$ requests in $O(1)$ rounds (with high probability), allowing for significant churn in the graph if $k$ is large.  The same basic approach can be used in the MPC model.

To maintain the MST, we build on the nice idea proposed by Italiano et al.~\cite{italiano19} of using Euler tours to represent the MST. When many edges are being added to a graph, the MST may change significantly, and we develop a graph structural property that allows us to quickly determine which edges need to be added and removed from the MST. When many edges are deleted from a graph, a different approach is needed. We reduce the problem to an MST in the CONGESTED-CLIQUE model, carefully simulating the algorithm by~\cite{jurdzinski17}.  (A natural approach at simulation would take $O(k)$ time, and more care is needed in the reduction to achieve $O(1)$ time.)  Thus we can handle $O(k)$ edge insertions and deletions in only $O(1)$ rounds with high probability.

A natural follow-up question is whether it is possible to do better.  We show a matching lower bound: it is impossible to handle $k^{1+\epsilon}$ requests in $O(1)$ rounds.  Thus, if the cluster needs to keep up with the incoming stream, then it can only handle $O(k)$ updates per round without falling behind the stream of updates.  

\section{Background and Related Work} 

Large scale graphs have recently become a topic of increasing interest.
Since these large scale graphs do not fit on a single machine, the graph is stored in a distributed setting, and the algorithms are distributed in nature. However, distributed graph algorithms of the past, e.g., the CONGEST model~\cite{peleg11}, tend to treat each vertex as a single machine, while these new graphs algorithms store many vertices on a single machine, changing the nature of the distributed graph algorithms. Two models that attempt to model these large scale graphs are the $k$-machine model~\cite{klauck15}, and the popular Massively Parallel Computational(MPC) model~\cite{karloff10}. These models have been relatively well studied~\cite{pandurangan15,pandurangan18,ghaffari19, goodrich11,lattanzi11}, and similar techniques are involved in both models. 

The connectivity and MST problems have been extensively studied in the MPC and $k$-machine models.  More recently, building on work in~\cite{mohsen16,hegeman15}, Jurdzinski and Nowicki~\cite{jurdzinski17} described an algorithm that constructs an MST in $O(1)$ rounds with $O(n)$ space per machine in the MPC model.\footnote{In fact, the algorithm is presented for the CONGESTED-CLIQUE, but it implies an MPC algorithm.} It is generally believed that no $o(\log{n})$ round connectivity algorithm exists in the MPC model for the sublinear regime where each machine has $O(n^{1-\epsilon})$ space. Assadi et al.~\cite{assadi19} however, demonstrate how to obtain such round complexity when the graphs involved are sparse. 

Dynamic graph algorithms in this setting have recently come up as a topic of interest. To the best of our knowledge, Italiano et al.~\cite{italiano19} was the first paper to look at dynamic updates in the MPC model.  They introduce the dynamic MPC model, several dynamic problems in the MPC model, as well as their solutions to them. In particular, they described how Euler tours can be used to solve the dynamic connectivity and dynamic approximate MST problems in $O(1)$ communication rounds. The natural extension, batch dynamic algorithms have also been very recently studied where more than one update arrives per round.  Dhulipala et al.~\cite{dhulipala2020} build on work on the Euler Tour Tree data structure in the parallel setting~\cite{acar19,tseng2020}, and demonstrate a batch-dynamic connectivity algorithm in the MPC Model using sketching techniques.  As of yet, we know of no existing work on the dynamic \emph{exact} MST problem or the batch-dynamic MST problem.

Our work has more in common with the work by Italiano et al.~\cite{italiano19}: we generalise their results and demonstrate how Euler tours can be used to solve the dynamic MST problem in $O(1)$ communication rounds. We also demonstrate how it is, in fact, possible to resolve $k$ queries in $O(1)$ rounds with high probability, where $k$ is the number of machines. Lastly, we show that it is not possible to do much better than $k$ queries in $O(1)$ rounds, by proving a lower bound. There can be no algorithm that can resolve $k^{1+\epsilon}$ queries in $O(1)$ rounds.

While they worked in the MPC model, we will primarily describe our results for the $k$-machine model, while then describing how our approach carries over to the MPC model.

\section{Models and Problems}

In this section, we describe the main models for cluster computing, and define the MST and Dynamic MST problems.

\subsection*{$k$-Machine Model}

We focus on the \textbf{$\textbf{k}$-machine} model described by Klauck et al.~\cite{klauck15}, and highlight some of the differences between the $k$-machine model and the MPC model. We are given a graph $G = (E,V)$, with $n$ vertices and $m$ edges. 

\noindent \textbf{Graph distribution:} In the k-machine model, we assume that the graph in question $G$ is distributed across $k$ machines in the \textbf{random vertex partition} model. The $n$ vertices of the graph are distributed uniformly at random across these $k$ machines, so that each vertex has a $1/k$ chance of being on any one machine. If a vertex is distributed onto a machine, so are all of the edges it is a part of. 

\noindent\textbf{Communication:} Communications occur in synchronous rounds. The communication topology of these $k$ machines is a clique, with bidirectional links between any two machines that can only send $\log{n}$ bits a round.

\noindent\textbf{Space restrictions}: Due to the large sizes of graphs involved, we impose a space restraint on each machine. At any point in time, each machine can only use an additional $O(m/k)$ space, a constant factor amount of additional space over the space required to store the edges. Since each machine receives from up to $k$ input communication channels each round, we also assume that each machine can also use $O(k)$ space with no problems. Hence we use $O(\max\{m/k,k\})$ space.

\subsection*{MPC Model}

The \textbf{MPC} model, described by Karloff et al.~\cite{karloff10}, is usually phrased with the amount of space $S$ being an input parameter instead of the number of machines $k$. However, since these algorithms usually only use a constant factor more space over the problem size, we can in fact also think of the MPC model having the number of machines $k$ as an input parameter instead of the amount of space $S$. 

\noindent\textbf{Space restrictions:} The MPC model has $k$ machines, each with up to $S$ space, with $kS = \tilde{\Theta}(m)$, where here $\tilde{\Theta}(m)$ hides additional log factors. 

\noindent\textbf{Communication:} Communications also occur in synchronous rounds. Machines can communicate as much as they like with any other machine, as long as for each machine, the total communications in each round is $O(S)$. Contrast this with the $k$-machine model, where machines can communicate up to a total of $k$ $\log{n}$ sized messages each round, and notice that the two models scale in opposite directions: for the $k$-machine model, more machines allows for more inter-machine bandwidth; for the MPC model, more machines means less inter-machine bandwidth. 

\noindent\textbf{Graph distribution:} Since machines can exchange all of their data in one round, the graph data can be distributed arbitrarily after a single round at the beginning of the algorithm. 

For both the $k$-machine model, as well as the MPC model, we are primarily focused on minimizing the round complexity, while respecting the space constraints. 

We highlight the primary differences between the $k$-machine and the MPC models here. As we will see, Lenzen's routing lemma~\cite{lenzen13} ensures that the methods of communication are not a real difference between the two models, and that the primary difference is in the scaling of the bandwidth with respect to the number of machines. We will find that in general, algorithms in the $k$-machine model tend to work on the premise that there are a small number of machines, so that the vertex partitioning model makes more sense. Algorithms in the MPC model tend to restrict the amount of space on each machine, and we see most works focus on the amount of space required on each machine for each algorithm to work. Most work ~\cite{assadi19,ghaffari18} on the connectivity and MST problems generally work in the regime where each machine has space that is $O(n^{1 -\epsilon})$. 

Here, we also highlight the \textbf{CONGESTED CLIQUE}~\cite{lotker03} model. While not a model intended for the study of large scale graphs, we find that results in this model are particularly illuminating. The CONGESTED CLIQUE model can be thought of as the special case of the $k$-machine model where $k = n$. Instead of a random vertex partition model, we instead have a bijection between machine and vertex, and have each machine contain its vertex's edge information. The communication topology is a clique, with bidirectional links between any two machines that can only send $\log{n}$ bits a round. 

\subsection*{MST and Dynamic MST}

The \textbf{MST} problem is as follows. Given a weighted undirected graph $G$, find a spanning tree such that the total sum of the weights of the edges in this spanning tree is minimised. In the context of the $k$-machine model, we ask only that the machines know if the edges that live on their machines are in the MST or not, since storing the actual MST itself on each machine requires too much space.

The \textbf{dynamic MST} problem introduces edge additions and edge deletions. Whenever an edge is added or deleted, only the two machines this edge lives on knows about the update. We ask that the machines know if the edges that live on their machines are in the MST or not, just as in the static case.

\section{Preliminaries}

In this section, we discuss some of the basic communication primitives in the $k$-machine and MPC models.

\subsection*{Lenzen Routing} 

We begin by recalling Lenzen's routing lemma~\cite{lenzen13}. 

\begin{theorem}
The following problems can be solved in $O(1)$ communication rounds in a fully connected system of $n$ nodes:

\begin{enumerate}
	\item \textbf{Routing}: Each node is the source or the destination of up to $n$ messages of size $O(\log n)$. Only the sources know the destinations of the messages and the contents. 
	\item \textbf{Sorting}: Each node is given up to $n$ comparable keys of size $O(\log n)$. Node $i$ needs to learn the keys with indices from $(i-1)n + 1$ to $in$. 
\end{enumerate}

\end{theorem}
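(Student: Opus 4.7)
The plan is to prove the routing claim first and then derive sorting by plugging routing into a sample-sort template. The technical heart is a random-intermediate load-balancing argument.

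For routing, I would adopt Valiant's two-phase scheme. In phase 1, each source independently ships each of its (up to) $n$ messages to a uniformly random intermediate; in phase 2, each intermediate forwards every message it holds to that message's true destination. Fix any ordered pair of machines $(u,v)$: the expected number of messages crossing the link $(u,v)$ in either phase is at most $1$, because each source originates $\leq n$ messages, each destination absorbs $\leq n$, and the intermediate is uniform over $n$ choices. A Chernoff-style concentration bound together with a union bound over the $n^2$ links then controls the maximum per-link load, and once every link carries $O(1)$ messages, all deliveries finish in $O(1)$ rounds.

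For sorting, I would use routing as the communication primitive inside sample-sort. Each machine chooses a small number of random samples from its local keys and ships them to a single coordinator via routing; the coordinator selects $n - 1$ splitter quantiles and broadcasts them back (broadcast being a special case of routing). The splitters partition the universe into $n$ buckets, each of size $O(n)$ with high probability by a standard splitter-sampling analysis, so a single additional routing call delivers each key to its bucket-owner. Each machine then sorts locally and, if needed, shifts a constant number of boundary keys with its two neighbours to realise the exact rank-indexed layout the statement requires.

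The main obstacle I anticipate is tightening the phase-wise Chernoff bound: a naive tail on a Binomial$(n, 1/n)$ sum only gives $O(\log n / \log \log n)$ per-link load with high probability, not $O(1)$, because the maximum of $n^2$ expectation-one random variables has a polylogarithmic heavy tail. The fix, and the heart of the argument, is to amortise over the constant round budget rather than demand single-round balance: show that the cumulative load across $O(1)$ rounds stays bounded even when per-round loads fluctuate, and exploit the symmetry of the two phases so that any imbalance in phase 1 can be absorbed by the random permutation used in phase 2. Carrying out this amortised analysis carefully is what turns the loose polylogarithmic bound into the $O(1)$ round complexity claimed by the lemma.
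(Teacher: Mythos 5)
The paper does not actually prove this theorem: it is quoted as an imported primitive from Lenzen's work on routing and sorting in the congested clique, so there is no in-paper argument to compare yours against. Judged on its own terms, your proposal has a genuine gap, and it is precisely the one you flag yourself. In Valiant-style two-phase routing, the number of messages a fixed source assigns to a fixed intermediate is a sum of about $n$ indicators with total expectation $1$, and the maximum of these $n^2$ per-link loads is $\Theta(\log n/\log\log n)$ with high probability. That is not a looseness in the analysis to be tightened: some physical link really does carry $\omega(1)$ messages in phase~1, and since a link transmits one $O(\log n)$-bit message per round, that single link forces $\omega(1)$ rounds. ``Amortising over the constant round budget'' cannot help --- a link holding $L$ messages needs $L$ rounds no matter how the rounds are scheduled --- and the phase-2 permutation cannot retroactively absorb congestion that has already occurred on a phase-1 link. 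The objection persists if each source instead deals its messages out one per intermediate in random order: then phase~1 is perfectly balanced, but the number of messages at intermediate $v$ destined for $w$ again has expectation $1$ and maximum $\Theta(\log n/\log\log n)$ over the $n^2$ pairs $(v,w)$, so the congestion merely moves to phase~2.

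Closing the gap requires a different mechanism, not a sharper tail bound. Lenzen's actual algorithm is deterministic: each node first announces, for every destination, how many messages it holds for that destination (one counter per destination per node, which fits in the per-round budget), and then a constant number of explicit redistribution steps guarantee that every node ends up holding $O(1)$ messages per destination, after which direct delivery finishes in $O(1)$ rounds. That rebalancing phase is where all the work lives, and nothing in your proposal substitutes for it. Your sorting reduction has a secondary issue of the same flavour: splitter-based buckets match the exact rank intervals $[(i-1)n+1,\, in]$ only up to the bucket-size deviation, which with a constant number of samples per machine is $\omega(1)$, so the final ``shift a constant number of boundary keys'' step does not produce the exact layout the statement demands. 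Since the paper only ever uses this theorem as a black box, the pragmatic course is to cite Lenzen rather than reprove it; a self-contained proof needs the demand-announcement-plus-deterministic-rebalancing structure, not a concentration argument over random intermediates.
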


Lenzen's routing lemma tells us that the MPC communication model and the $k$-machine communication models are only different up to constant factors from each other, and that the only real restriction is the total bandwidth during each communication round. In the $k$-machine model, this bandwidth scales with the number of machines, while in the MPC model, this bandwidth scales inversely with the number of machines. 

\subsection*{Routing broadcasts}

A machine performs a \textbf{broadcast} if it sends the same bits through all of its communication links during that communication round. The following lemma is in the spirit of the ``Conversion Theorem'' (Theorem 4.1 of~\cite{klauck15}). While they used a randomized routing approach to obtain $O(\log{n})$ bounds, we demonstrate that a deterministic approach gives us $O(1)$ bounds.

\begin{lemma}\label{reroutinglem}
Any algorithm in the $k$-machine model that performs a total of $B$ broadcasts and/or max computations in $R$ sets, with the broadcasts and computations within each set having no dependencies, can be completed in a total of $O(B/k + R)$ rounds. 
\end{lemma}

This lemma is relatively straightforward, and its proof is available in the appendix.

\section{Dynamic MST: One at a Time}

Before going into the batch dynamic MST algorithm, we first describe the dynamic MST algorithm in this section that can handle one update a time.  In the following section, we show how to generalize this approach to $k$ updates at a time.  The main goal of this section is to prove the following theorem:

\begin{theorem}\label{mstthm}
There is an algorithm that maintains a dynamic MST in $O(1)$ communication rounds for each update.  If the graph is initially not empty, then initialization of the data structure after the MST instance has been solved takes $O(n/k)$ rounds.
\end{theorem}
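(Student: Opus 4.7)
The plan is to maintain the MST using an Euler tour representation distributed across the $k$ machines, in the spirit of Italiano et al. Each tree in the MST forest is encoded as its Euler tour: a cyclic sequence of directed edge traversals. On top of each tour I would layer a balanced segment-tree-like aggregation structure that supports (i) testing whether two vertices are in the same tour, (ii) computing the maximum-weight edge on the tree path between two vertices, and (iii) computing the minimum-weight non-tree edge crossing any cut obtained by removing a single tree edge. The tour and its aggregation structure are sharded across the $k$ machines in contiguous blocks of size $O(n/k)$, so that internal nodes of the aggregation tree can be updated by $O(\log n)$ broadcasts/max-computations, which compress to $O(1)$ rounds via Lemma~\ref{reroutinglem}.

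For initialization with a non-empty graph, I would first invoke a known $k$-machine MST algorithm to compute the initial MST, then construct the Euler tour and its aggregation structure. Producing the tour is a standard DFS-linearization; distributing $O(n)$ tour entries across $k$ machines and building the associated aggregation tree of min/max values over edge weights and over the best non-tree edges incident to each tour segment can be carried out with $O(n/k)$ rounds of Lenzen routing, matching the stated initialization cost. Each vertex's owner machine also precomputes, for every vertex it holds, a sorted-by-weight list of the non-tree edges incident to that vertex, which will be used to answer cut-min queries later.

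Handling a single update is then reduced to a constant number of calls to the primitives above. For an insertion of $(u,v)$ with weight $w$: if $u,v$ lie in different tours, concatenate the two tours; if they lie in the same tour, query the max-weight edge $e^\star$ on the $u$--$v$ path and swap $(u,v)$ with $e^\star$ iff $w<w(e^\star)$, which costs a constant number of tour splits and concatenations. For a deletion of $e=(u,v)$: if $e$ is not a tree edge, simply remove it from the owner machines; if it is a tree edge, split the tour at $e$ into tours for components $A$ and $B$ (which correspond to two contiguous ranges in the pre-split Euler tour), then issue a single cut-min query to find the minimum-weight non-tree edge crossing $(A,B)$, and if one exists, insert it to merge the tours. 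The main obstacle I expect is realizing the cut-min query in $O(1)$ rounds: each vertex may own many non-tree edges, so brute-force scanning is too expensive. My plan is to have each machine maintain, for each vertex, a lightweight sketch of its non-tree edges grouped by the tour position of the other endpoint, so that the minimum non-tree edge with the other endpoint in a contiguous range of tour positions is available via $O(\log n)$ aggregation-tree lookups; by Lemma~\ref{reroutinglem} these lookups batch into $O(1)$ rounds, completing the proof.
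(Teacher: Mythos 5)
Your high-level skeleton (Euler tours, path-max for insertions, cut-min for deletions) matches the paper's, but the implementation you propose has real gaps. The central one is your reliance on a distributed, balanced aggregation structure layered over the tour, updated by ``$O(\log n)$ broadcasts/max-computations, which compress to $O(1)$ rounds via Lemma~\ref{reroutinglem}.'' That lemma gives $O(B/k+R)$ rounds where $R$ counts \emph{dependency levels}; a leaf-to-root update or a root-to-leaf descent in a distributed segment tree has $\Theta(\log n)$ dependent levels, so the lemma yields $O(\log n)$ rounds per update, not $O(1)$, unless you explicitly argue the levels can be made independent (plausible for max-aggregation with sibling prefetching, but you have not done so). A second gap is structural: in the $k$-machine model the edges sit where the random vertex partition put them, each machine has only $O(\max\{m/k,k\})$ extra space, and every re-root, split, or merge globally relabels all $2n-2$ tour positions. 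Your contiguously sharded aggregation tree and your per-vertex sketches ``grouped by the tour position of the other endpoint'' are keyed on exactly the quantities that change at every update; you would need to show these structures are either invariant under the relabeling maps or cheaply repairable, and neither is addressed.

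The paper avoids all of this machinery. Its observation is that the two interval labels $e_{in},e_{out}$ stored on each edge (plus, per machine, one representative edge label for each neighbouring vertex) already suffice: after $O(1)$ values are broadcast (the parent edge of $v$ for an insertion, or the deleted edge's labels for a deletion), every machine can \emph{locally} decide which of its edges lie on the $u$--$v$ tree path (Lemma~\ref{lem3}) or which of its edges cross the induced cut (Lemma~\ref{lem1}); a single global max or min converge-cast then finishes the query, and Lemma~\ref{reroutinglem} is invoked only to route these flat, dependency-free broadcasts. Re-rooting, splitting, and merging are likewise done by broadcasting $O(1)$ parameters of a piecewise arithmetic map that each machine applies to its own labels, so no hierarchical structure ever needs repair. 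For initialization the paper simulates Bor\r{u}vka in $O(n/k+\log n)$ rounds and uses a $k$-way merge of Euler tours per phase, rather than a DFS linearization plus Lenzen routing. To salvage your write-up you would either need to prove the $O(1)$-round claim for your aggregation structure under the model's partition and space constraints, or replace it with the flat interval-label argument.
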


We split updates into edge additions and edge deletions, handled separately. When an edge is added, we do cycle deletion to restore the MST. When an edge is deleted, we add back the minimal edge across the induced cut. As in Italiano et al.~\cite{italiano19}, where Euler tours were used to solve the dynamic connectivity and dynamic approximate MST problems, we make use of the same basic approach.  Euler tours were first used in the dynamic MST problem by Henzinger et al.~\cite{henzinger95}

\subsection{Euler tours}

An \textbf{Euler tour} in a general graph $G$ is a path that visits each edge exactly once. In the context of an MST, we treat each edge as a bidrectional edge, and an Euler tour refers to a cycle that visits each edge exactly twice. An Euler tour is the same as a depth first search edge visit order, but it is generally more useful to think of an Euler tour as a cycle. An example of an MST with an Euler tour over it can be seen in figure \ref{fig:eulertourtree}:

\begin{figure}[h]
    \centering
    \includegraphics[width=8cm]{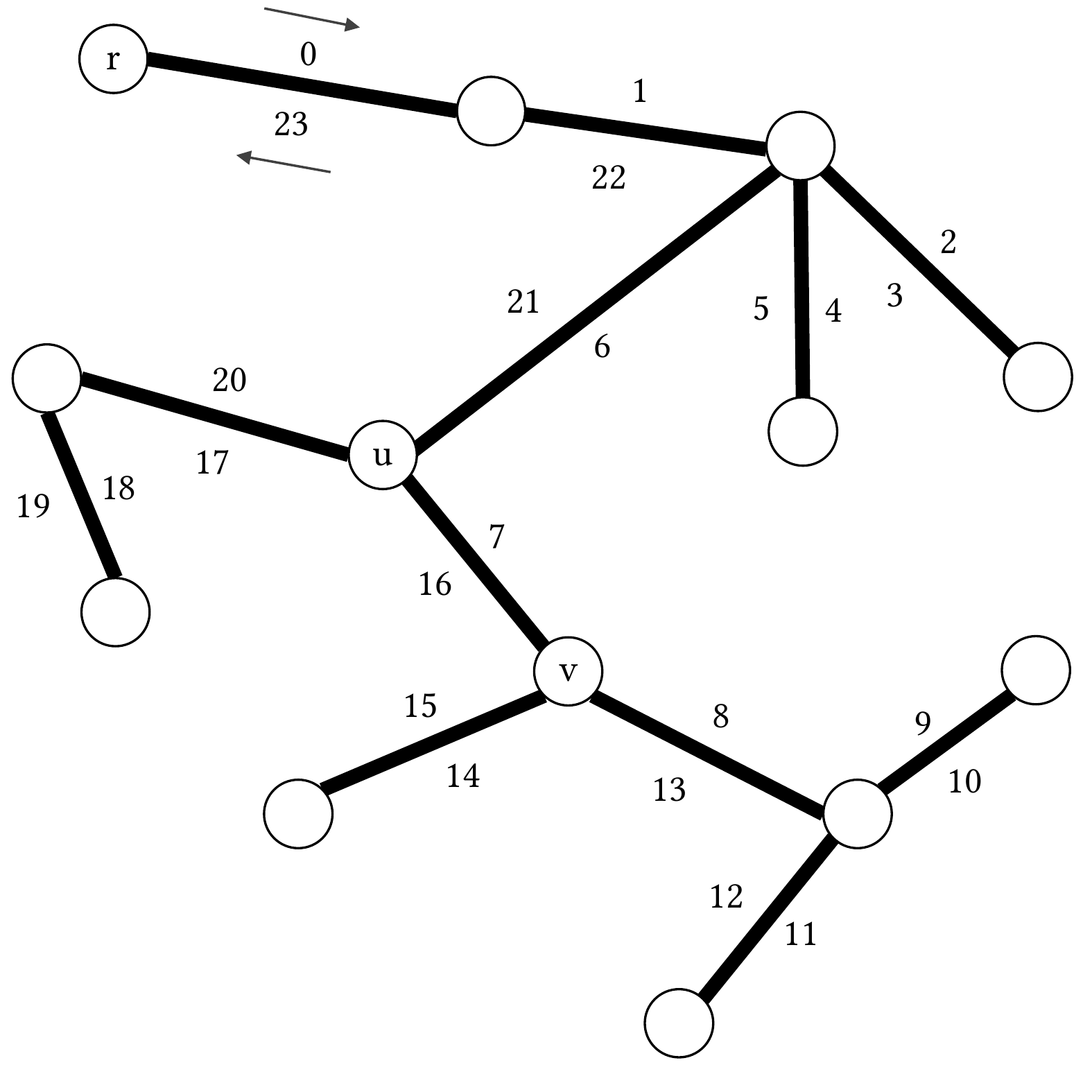}
    \caption{Euler tour over an MST, rooted at $r$}
    \label{fig:eulertourtree}
\end{figure}

We call the start of the Euler tour the \textbf{root} of the Euler tour. In general, when we refer to the root of an MST with an Euler tour structure over it, we refer to the start of the Euler tour. 

There are several different ways in which an Euler tour can be described. In Henzinger et al.~\cite{henzinger95} as well as the approach used by Italiano et al.~\cite{italiano19}, the Euler tour was described by keeping track of the order in which the vertices are visited. We employ a slightly different approach, and label the edges in the order in which they are traversed through. 

We augment each edge $e$ with these two integer values, and call the smaller one $e_{in}$, and the larger one $e_{out}$ for each of them. We now have three important lemmas:

\begin{lemma}\label{lem1}
Consider an MST $M$, rooted at $r$, and some cut edge $c \in M$ with labeled values $c_{in}$ and $c_{out}$. In the graph $M' = M\backslash c$, an edge $e$ is not in the same component as the vertex $s$ iff $e_{in} > c_{in}$ and $e_{out} < c_{out}$. 
\end{lemma}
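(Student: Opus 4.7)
The statement is about the standard nesting structure of DFS intervals, reformulated in terms of edge labels rather than vertex entry/exit times. I interpret $s$ as the root $r$ of the Euler tour (the ``start'' vertex). The plan is to show that the labels $(e_{in}, e_{out})$ of a tree edge $e$ behave exactly like DFS parenthesis intervals: two edges' intervals are either disjoint or nested, and an edge $e$ lies in the subtree ``below'' $c$ (the component that is separated from $r$ by deleting $c$) iff its interval is strictly nested inside $(c_{in}, c_{out})$.

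First I would fix notation. Root $M$ at $r$ and orient each tree edge from parent to child. Removing $c = (u,v)$, with $u$ the parent endpoint and $v$ the child endpoint, splits $M$ into the component $A \ni r$ and the component $B$ consisting of $v$ together with all its descendants. The Euler tour starts at $r$; when it first reaches $u$ it traverses $c$ from $u$ to $v$ at time $c_{in}$, then it recursively visits every edge inside the subtree rooted at $v$, and finally returns by traversing $c$ from $v$ to $u$ at time $c_{out}$. Between these two traversals, the tour never leaves the subtree $B$: to leave $B$ along $c$ would require using one of $c$'s two traversals, both of which are accounted for at times $c_{in}$ and $c_{out}$.

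The key claim is then the DFS parenthesis property for edges: for any two distinct tree edges $e, c$, their intervals $[e_{in}, e_{out}]$ and $[c_{in}, c_{out}]$ are either disjoint or strictly nested, and in the latter case the inner edge is a descendant of the outer edge. This follows immediately from the preceding observation applied at every edge: the two traversals of any edge bracket exactly the recursive calls in its child subtree. I would prove this by induction on the recursion depth of the DFS, noting that at every entry into a subtree only one edge is entered, and the tour cannot exit that subtree until it retraverses the same edge.

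Combining the two, an edge $e \ne c$ belongs to $B$ iff every traversal of $e$ occurs strictly between times $c_{in}$ and $c_{out}$, i.e.\ iff $c_{in} < e_{in}$ and $e_{out} < c_{out}$. The reverse direction uses that if $e \in A$, then either $e$'s interval is disjoint from $c$'s (if $e$ lies in a different branch above $c$) or it strictly contains $c$'s interval (if $e$ is an ancestor edge of $c$); in neither case is $e$'s interval nested inside $c$'s. The main obstacle is purely bookkeeping: verifying the parenthesis property in the edge-label formulation rather than the more familiar vertex-time formulation. Once that is in place the equivalence is immediate.
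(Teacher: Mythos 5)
Your proposal is correct and follows essentially the same approach as the paper: the paper's (much terser) proof simply observes that $c_{in}$ and $c_{out}$ bracket exactly the portion of the Euler tour spent inside the separated component, which is precisely the nesting/parenthesis property you spell out in detail. Your reading of $s$ as the root $r$ matches the paper's intent, and your added care about the converse direction (disjoint vs.\ containing intervals for edges outside the subtree) only strengthens the argument.
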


\begin{proof} Let $M^* $ be the component separated from $s$. In the Eulerian cycle $C$, notice that $c_{in}$ denotes the time the Eulerian cycle enters the component $M^*$, and $c_{out}$ is the time it leaves the component $M^*$. As such, all the edges that are in the component $M^*$ will be visited between $c_{in}$ and $c_{out}$, and will hence have values between $c_{in}$ and $c_{out}$. 
\end{proof}

\begin{lemma}\label{lem2}
Consider an MST $M$ rooted at $r$. Consider any arbitrary vertex $s$ that is not $r$. The edge with the highest labeling with one endpoint touching $s$ and the edge with the smallest labelling with one endpoint touching $s$ are the same edge $e$.
\end{lemma}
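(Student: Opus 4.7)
The plan is to show that the edge $e$ in the statement is precisely the parent edge of $s$ in the rooted tree $M$, and then to verify that the labels on every other incident edge lie strictly between the two labels on $e$.

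First I would note that since $s \neq r$, the vertex $s$ has a unique parent edge $p(s)$ in $M$ connecting it to its parent. Since the Euler tour starts at $r$, the very first traversal that visits $s$ must be a traversal of $p(s)$ directed from the parent into $s$; call the label of this traversal $\ell_1$. Analogously, after the tour finishes visiting everything on $s$'s side, the last traversal leaving $s$ is again $p(s)$, going from $s$ back to the parent; call this label $\ell_2$. By construction $\ell_1 < \ell_2$, so $p(s)_{in} = \ell_1$ and $p(s)_{out} = \ell_2$.

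Next I would argue that for any other edge $e'$ incident to $s$, the edge $e'$ goes to a child of $s$ in $M$. The Euler tour visits the subtree rooted at each child contiguously, entering it only after the tour has arrived at $s$ for the first time, and finishing it before the tour finally departs $s$ via $p(s)$. Hence both label values $e'_{in}$ and $e'_{out}$ are strictly greater than $\ell_1$ (the tour had not entered $s$ before $\ell_1$, so no incident edge was touched) and strictly less than $\ell_2$ (once the tour traverses $p(s)$ outward, no incident edge of $s$ is traversed again). Therefore every label on an edge incident to $s$ other than $p(s)$ lies in the open interval $(\ell_1,\ell_2)$, which makes $\ell_1$ the unique minimum and $\ell_2$ the unique maximum among all such labels. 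Both belong to $e = p(s)$, proving the lemma.

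The only real obstacle is the structural claim that the Euler tour enters and leaves each subtree in one contiguous interval; this is immediate from the DFS/Euler-tour definition already recalled in the paper, so I would treat it as essentially by construction (or, if desired, by a one-line induction on the depth of $s$). Everything else is bookkeeping about which traversal is first and which is last at the vertex $s$.
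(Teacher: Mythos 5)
Your proof is correct and follows essentially the same approach as the paper's (much terser) argument: identify $e$ as the parent edge of $s$, whose two labels mark the first entry into and last exit from the subtree rooted at $s$, so that all other incident edges receive labels strictly in between. The extra detail about contiguous subtree visits is a fine elaboration of what the paper treats as immediate.
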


\begin{proof} Let $e$ be the first edge that the Euler tour crosses to enter $v$. This is the desired edge $e$, since it is the first and last time the Euler tour visits the vertex $s$.\end{proof}

Let $r$ be the root of the Euler cycle, and let $s$ be any vertex, we call the edge $e$ as in lemma \ref{lem2}, the \textbf{parent edge} of $s$ with respect to $r$. In figure \ref{fig:eulertourtree}, the parent edge of $v$ with respect to $r$ is the edge $(u,v)$.

\begin{lemma}\label{lem3}
Consider an MST $M$, and some Euler tour. Let $r$ be the root of this Euler tour. Consider any arbitrary vertex $s$ that is not $r$. Let $p$ be the parent edge of $s$ with respect to $r$. An edge $e$ is on the path from $r$ to $s$ iff $e_{in} \leq p_{in}$ and $e_{out} \geq p_{out}$.
\end{lemma}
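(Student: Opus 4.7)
The plan is to reduce both directions of the iff to the connectivity statement in Lemma~\ref{lem1}, using Lemma~\ref{lem2} to identify $p$ as an edge incident to $s$. The key observation is that ``$e$ lies on the $r$-to-$s$ path in $M$'' is equivalent to ``removing $e$ separates $r$ from $s$,'' and Lemma~\ref{lem1} already tells us exactly which edges end up in the component not containing the root when a single edge is deleted --- so it suffices to translate ``$s$ is separated from $r$'' into a condition on the labels of $p$.

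First I would dispatch the trivial case $e = p$: then $e$ is the last edge of the $r$-to-$s$ path and $e_{in} = p_{in}$, $e_{out} = p_{out}$, so the iff holds with equality. After that I can assume $e \neq p$; since the $2(n-1)$ Euler-tour labels are all distinct, the labels of $e$ and $p$ cannot coincide on either coordinate, so the relevant inequalities become strict.

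For the forward direction, suppose $e$ lies on the $r$-to-$s$ path. Then $M \setminus \{e\}$ has $s$ in the component not containing $r$; by Lemma~\ref{lem2}, $p$ is incident to $s$, so $p$ is likewise separated from $r$. Applying Lemma~\ref{lem1} with $c = e$ gives $p_{in} > e_{in}$ and $p_{out} < e_{out}$, which is what we want. For the converse, suppose $e_{in} < p_{in}$ and $e_{out} > p_{out}$. Lemma~\ref{lem1} (again with $c = e$) tells us $p$ lies in the component of $M \setminus \{e\}$ not containing $r$; since $p$ touches $s$, this means $s$ and $r$ sit in different components of $M \setminus \{e\}$, i.e.\ $e$ lies on the (unique) $r$-to-$s$ path.

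The only subtlety I expect is the boundary case $e = p$, which is precisely why the lemma uses non-strict inequalities; the remaining arguments are bookkeeping on top of Lemmas~\ref{lem1} and~\ref{lem2}. A minor wrinkle is that the statement of Lemma~\ref{lem1} references ``the vertex $s$'' without defining it, but its proof makes clear that this should be read as the root $r$, which is the reading I use when invoking it above.
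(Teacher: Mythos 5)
Your proposal is correct and follows essentially the same route as the paper's own proof: reduce ``$e$ is on the $r$-to-$s$ path'' to ``$e$ is a cut edge separating $r$ from $s$,'' apply Lemma~\ref{lem1} with $c = e$ to the parent edge $p$ in both directions, and handle $e = p$ as the boundary case where equality holds. Your observations that distinctness of labels makes the remaining inequalities strict, and that the ``$s$'' in the statement of Lemma~\ref{lem1} should be read as the root, are both accurate and make the argument slightly cleaner than the paper's, but the substance is identical.
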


\begin{proof} Notice that an edge is on the path from $r$ to $s$ iff it is a cut edge that when removed partitions $r$ and $s$ onto two separate halves. 

($\Leftarrow$) Suppose $e_{in} < p_{in}$ and $e_{out} > p_{out}$. Then, applying lemma \ref{lem1} to the cut edge $e$ tells us that the edge $p$ is not in the same partition as $r$ with the cut edge $e$, so $e$ is a cut edge that separates $r$ and $s$ and we are done. If $e_{in} = p_{in}$ and $e_{out} = p_{out}$, then the edge $e$ is precisely the parent edge of $s$ with respect to $r$, and is the first time the component $s$ is visited, and is hence also a cut edge.

($\Rightarrow$) For the other direction, suppose $e$ is a cut edge separating $r$ and $s$. If $e$ does not touch $s$, then by lemma \ref{lem1}, the parent edge $p$ of $s$ satisfies $e_{in} < p_{in}$ and $e_{out} > p_{out}$. If $e$ does indeed touch $s$, then $e$ must be the parent edge of $s$ with respect to $r$, and we have $e_{in} = p_{in}$ and $e_{out} = p_{out}$.\end{proof}

Importantly, lemmas \ref{lem1} and \ref{lem3} give us a way to determine where edges are in the MST, from just the two values $e_{in}$ and $e_{out}$ of any edge. 

\subsection{Data structures}
To represent our Euler tour, we augment each edge in the MST with:
\begin{enumerate}
	\item The two integer values from our Euler tour, and the direction.
	\item The size of the Euler tour this edge is in.
\end{enumerate}
This additional edge information requires a constant factor more space over the original edge information.
For each machine, we also store:
\begin{enumerate}
	\item For each neighbouring vertex, the Euler tour information of a single arbitrary edge of that neighbour.
\end{enumerate}
This requires an amount of space equal to the number of neighbours, which is bounded by the number of edges on each machine, and is hence again a constant factor more space over the original edge information. 

\subsection{Maintaining the data structures}

We begin first by demonstrating several transformations that can be made in the Euler tour structures, and the number of rounds of communications required for each of them. 

\begin{lemma}\label{lem4}
Euler tours can be re-rooted after $O(1)$ broadcasts. 
\end{lemma}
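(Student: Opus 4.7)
The plan is to observe that re-rooting the Euler cycle from $r$ to the new root $r'$ is nothing more than a cyclic shift of the $2(n-1)$ edge labels by a single offset, so the whole operation reduces to broadcasting that offset. Let $T = 2(n-1)$ denote the tour length. By Lemma \ref{lem2}, the unique edge at $r'$ with smallest label, call it $p$, is the edge by which the old tour first enters $r'$; in the re-rooted tour, this edge must receive the new label $1$. The machine hosting $r'$ stores every edge incident to $r'$, so it can determine $p_{in}$ (and $T$) by a purely local computation, with no communication.

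Next, that machine performs a single broadcast of the pair $(p_{in}, T)$. Every machine then updates its stored edges locally by applying the cyclic shift
\[
\phi(x) \;=\; \begin{cases} x - p_{in} + 1 & \text{if } x \geq p_{in},\\ x + T - p_{in} + 1 & \text{if } x < p_{in}, \end{cases}
\]
replacing each stored pair $(e_{in}, e_{out})$ by $(\phi(e_{in}), \phi(e_{out}))$. Whenever the two new values come out of order, the machine swaps them and flips the recorded direction bit so the invariant $e_{in} < e_{out}$ is preserved. This out-of-order case occurs precisely for those edges whose old labels straddled $p_{in}$, that is, the edges on the path from $r$ to $r'$ in the MST, and corresponds to the two traversals of those edges swapping their roles under re-rooting. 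The per-neighbour Euler-tour summaries stored on each machine are then regenerated from its local edges, again without additional communication.

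The whole procedure uses exactly one broadcast, so by Lemma \ref{reroutinglem} it completes in $O(1)$ rounds. The main subtlety I expect is the book-keeping for the straddling edges and for the direction flags; beyond that, correctness is immediate from the fact that an Euler cycle is genuinely a cycle, so every cyclic rotation of it is again a valid Euler tour with the same \emph{pairwise} ordering of entries and exits, and consequently Lemmas \ref{lem1}--\ref{lem3} continue to hold verbatim under the new labelling.
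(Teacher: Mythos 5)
Your proposal is correct and takes essentially the same approach as the paper: broadcast a single offset (the paper uses the label of an arbitrary outgoing edge of the new root, you use $p_{in}$ of its first-entry edge) and have every machine apply the corresponding cyclic shift to its labels locally, which preserves the tour because an Euler tour is a cycle. Your extra book-keeping about swapping straddling $(e_{in},e_{out})$ pairs and flipping direction bits is a detail the paper's one-line proof glosses over, but it does not change the argument.
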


\begin{proof} Suppose we wish to reroot the Euler tour to some vertex $u$. To do so, vertex $u$ broadcasts the edge value of any outgoing edge, say $d$. Each machine now subtracts $d$ from all edge values on its machines, taken modulo $2n-1$. This maintains the Euler tour structure, since Euler tours are cycles. \end{proof}

\begin{lemma}\label{lem5}
Consider an MST with an Euler tour structure over it. Given an edge $e = (u,v)$ in the MST that disconnects the MST into two separate trees, we can delete it and maintain the two separate Euler tours after $O(1)$ broadcasts.
\end{lemma}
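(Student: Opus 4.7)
The plan is to have the machine holding $e$ broadcast the pair $(e_{in}, e_{out})$ and let every other machine update its stored Euler tour labels locally. The key observation, inherited from Lemma \ref{lem1}, is that the interval $[e_{in}, e_{out}]$ of the original Euler tour is exactly the stretch of time the tour spends traversing the subtree on the far side of $e$ together with $e$'s own two crossings. Once a machine knows $(e_{in}, e_{out})$, it can decide, for each of its tree edges $f \neq e$, whether $f$ belongs to the inner component (both labels strictly between $e_{in}$ and $e_{out}$) or to the outer component.

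Given this classification, each machine performs the following purely local relabeling. If $f$ is inner, replace $(f_{in}, f_{out})$ with $(f_{in} - e_{in},\, f_{out} - e_{in})$; this yields labels in $\{1, \ldots, e_{out} - e_{in} - 1\}$, which is precisely the Euler tour on the inner tree rooted at $v$, because the original tour entered the inner subtree at time $e_{in}$ and its subsequent steps until time $e_{out}$ form a DFS traversal starting at $v$. If $f$ is outer, keep any label that is $< e_{in}$ unchanged and shift any label that is $> e_{out}$ down by $e_{out} - e_{in} + 1$; this closes the gap left by removing $e$ together with the inner subtree. Note that edges on the original root-to-$u$ path are exactly those with $f_{in} < e_{in} < e_{out} < f_{out}$, and the rule correctly adjusts only their $f_{out}$. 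The stored size field is recomputed as $e_{out} - e_{in} - 1$ for the inner tree and reduced by $e_{out} - e_{in} + 1$ for the outer tree, the direction of each edge is preserved by the shift, and the cached neighbour information is refreshed against the new labels. Finally, $e$ itself is removed.

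The one thing that really needs verification, and which I expect to be the main obstacle, is that the relabeling genuinely produces two valid Euler tours, not just two properly-partitioned sets of labels. For the inner component this is essentially immediate from Lemma \ref{lem1}, since the edges with labels in $(e_{in}, e_{out})$ are, in the order the original tour visits them, a complete DFS of the $v$-subtree. For the outer component, one argues that deleting the contiguous block $[e_{in}, e_{out}]$ from the original cyclic traversal and reindexing the remainder coincides with the traversal a DFS on the outer tree from the original root would produce; the Euler tour property (each tree edge traversed exactly twice, in consistent order) is preserved under this splicing. The communication cost is then a single broadcast plus $O(1)$ local work per edge, so $O(1)$ broadcasts overall as claimed.
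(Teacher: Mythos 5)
Your proof takes essentially the same approach as the paper: broadcast the pair $(e_{in}, e_{out})$ and apply the identical piecewise shift to the labels (inner labels reduced by $e_{in}$, large outer labels reduced by $e_{out}-e_{in}+1$) and to the stored sizes, all as local work. The only detail the paper makes explicit that your ``refresh'' step glosses over is the case where a machine's cached neighbour edge is the deleted edge $(u,v)$ itself, in which case $u$ and $v$ must each broadcast a replacement edge.
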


\begin{proof} The edge being deleted broadcasts its two values $e_{min}$ and $e_{max}$. To restore the Euler tour property in both disconnected trees, we simply apply the following function $f$ to the weights $w$ globally:
$$
f(w)=
\begin{cases}
w, &\text{for } w < e_{min}\\
w - e_{min}, &\text{for } w > e_{min} \text{ and } w < e_{max}\\
w - (e_{max} - e_{min} + 1) &\text{for }w > e_{max}\\
\end{cases}
$$

Notice that this results in two Euler tours. The values in the component connected to the root have to have their values connected again, and have the large values shifted down by the number of edges removed, $e_{max} - e_{min} + 1$, while the values in the component disconnected from the root have to have their values shifted down to $0$.

We also have that the sizes of the Euler tours have to be updated. We apply the following function $g$ to the edge with weight $w$ and size $s$ globally:
$$
g(w,s)=
\begin{cases}
s - (e_{max} - e_{min} + 1), &\text{for } w < e_{min}\\
e_{max} - e_{min} - 1, &\text{for } w > e_{min} \text{ and } w < e_{max}\\
s - (e_{max} - e_{min} + 1) &\text{for }w > e_{max}\\
\end{cases}
$$

The only remaining thing to maintain is the additional Euler tour edge. In the event that machines used the edge $(u,v)$ as the edge chosen edge, since the edge was deleted, a new replacement edge is required. Here we just have both $u$ and $v$ broadcast a new edge of theirs, and we are done.\end{proof}

\begin{lemma}\label{lem6}
Consider two MSTs $M_1$ and $M_2$, both with an Euler tour structure over them. Given an edge $(u,v)$ that connects the two MSTs, we can combine the two MST and maintain the Euler tour after $O(1)$ broadcasts.
\end{lemma}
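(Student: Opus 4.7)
The plan is to reduce combining the two tours to two applications of Lemma \ref{lem4} (re-rooting) together with a uniform linear shift of the labels on one of the trees, implemented via a constant number of short broadcasts plus purely local updates on each machine.

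First I would re-root $M_1$ at $u$ and $M_2$ at $v$; by Lemma \ref{lem4} this costs $O(1)$ broadcasts. Let $T_i$ denote the size of the Euler tour of $M_i$, which every edge of $M_i$ already stores in its size field. The machine owning $u$ reads $T_1$ off any incident MST edge and broadcasts it, and the machine owning $v$ broadcasts $T_2$; these are $O(1)$ broadcasts. After this, every machine knows the target tour, namely ``traverse $M_1$ from $u$, cross $(u,v)$ to $v$, traverse $M_2$ from $v$, cross back to $u$.'' Each machine then updates its own edges locally: edges of $M_1$ keep their labels, edges of $M_2$ have both of their labels increased by $T_1 + 1$, the new edge $(u,v)$ is assigned $e_{in} = T_1$ and $e_{out} = T_1 + T_2 + 1$, and the tour-size field of every affected edge is set to $T_1 + T_2 + 2$. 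The resulting labels are exactly those of a depth-first traversal of the merged tree starting at $u$, so the $e_{in}/e_{out}$ invariants used in Lemmas \ref{lem1} and \ref{lem3} are restored on the combined tree.

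For the auxiliary ``one arbitrary incident edge per neighbour'' records, observe that for any stored neighbour $w$ the stored label either lies in $M_1$ (no shift) or in $M_2$ (shift by $T_1 + 1$); since $T_1$ has already been broadcast, each machine can apply this shift locally with no further communication. The only new neighbour record needed is the one created between $u$ and $v$, which is handled by $u$ and $v$ each broadcasting the label of $(u,v)$ in one additional round. The main thing to check is that the concatenation with a uniform shift really is a valid Euler tour of the merged tree whose labels satisfy the invariants of Lemmas \ref{lem1}--\ref{lem3}; this, however, is immediate from the DFS interpretation of the Euler tour, so the entire combine operation completes in $O(1)$ broadcasts.
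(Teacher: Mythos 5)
Your proof is correct and takes essentially the same approach as the paper: broadcast the two tour sizes and one anchor edge label per endpoint, then perform all relabeling locally. The only cosmetic difference is that the paper splices $M_2$'s tour into the middle of $M_1$'s tour at the anchor value $a$, folding the re-rooting of $M_2$ at $v$ into a single modular shift $f_{M_2}(w) = a + 1 + ((w - b) \bmod s_2)$, whereas you explicitly re-root both trees via Lemma \ref{lem4} and append the $M_2$ excursion at the end of $M_1$'s tour; both yield valid Euler tours of the merged tree in $O(1)$ broadcasts.
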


\begin{proof} The machines hosting $u$ and $v$ both broadcast the size of their individual Euler tours $s_1$ and $s_2$ respectively, as well as the value of an outgoing edge from $u$ and $v$ say $a$ and $b$ respectively. The new size of the Euler tour is then $s_1 + s_2 + 2$, and the Euler tour values are updated by the function $f_{M_1}$ and $f_{M_2}$ for the two Euler trees respectively:
\begin{align*}
f_{M_1}(w)&=
\begin{cases}
w, &\text{for } w < a\\
w + s_2 + 2 &\text{for }w >= a\\
\end{cases}\\
f_{M_2}(w) &= a + 1 + (w - b \mod s_2)
\end{align*}

The new edge $(u,v)$ has the values $a$ and $a + s_2 + 1$. This describes the Euler tour starting from $u$, passing through $(u,v)$ into $M_2$ at step $a$, and then passing back through to continue the Euler tour in $M_1$. 

Notice that no additional work is required for the additional Euler tour edge value chosen for each neighbour.\end{proof}

Lemmas \ref{lem5} and \ref{lem6} allow us to update the MST by deleting and then adding edges into the MST as required. All that remains is for us to demonstrate that the Euler tour structure allows us to determine the edges to be deleted and added when an update in $G$ occurs.

\subsection{Updates}

In this section, we describe how the edges to be added/deleted can be determined when an update arrives using the Euler tour structure in $O(1)$ communication rounds.

\subsubsection{Edge additions}

We perform edge additions as follows: We see that in the event that any edge $(u,v)$ is added, to maintain the MST, we add that edge to the MST, find the unique cycle that is created, and remove the largest weight edge from the MST. To do so, we will on any input $(u,v)$, have each machine determine if any of their edges that is in the current MST is on the path from $u$ to $v$, then a leader node will find the global maximum from the largest value from each machine. We see that each of the machines can determine if the edge is on the path from $u$ to $v$ as follows:

\begin{enumerate}
    \item Reroot the tree to $u$ using lemma \ref{lem4}.
    \item $v$ determines its parent edge $p$ and broadcasts $p_{min}$ and $p_{max}$.
    \item Edges $e$ are on the path from $u$ to $v$ iff $e_{min} < p_{min}$ and $e_{max} > p_{max}$.
    \item A max query is run on edges in this set.
\end{enumerate}

By lemma \ref{lem3}, the edges on the path from $u$ to $v$ are labeled with values such that $e_{min} \leq v_{min}$ and $e_{max} \geq v_{max}$. Now, each machine can figure out which of their edges that are in the MST have values that satisfy this property, and a leader node can figure out the global maximum. The leader node compares the current largest weight edge with the new edge, and makes the graph changes as required.

\subsubsection{Edge Deletions}\label{edgedeletionsection}

To complete edge deletions, recall that to maintain the MST after an edge $e$ in the MST is deleted, we can find the minimum weight edge across the cut and add it back into the MST. 

Lemma \ref{lem1} states that given an edge $c$ in the MST that bipartitions the graph, edge $e$ is not in the same component as the root $r$ iff $e_{in} > c_{in}$ and $e_{out} < c_{out}$. 

Let $V_i$ be the vertices that live on machine $i$, and let $N(V_i)$ be the neighbouring vertices of $V_i$ in the graph $G$. We determine the minimum edge across the cut as follows:

\begin{enumerate}
    \item The edge being deleted broadcasts the values $c_{in}$ and $c_{out}$. 
    \item For each vertex $v \in V_i\cup N(V_i)$: Pick an arbitrary edge $e$ connected to $v$, with Euler tour value $e_{in}$ and $e_{out}$. 
    \begin{itemize}
        \item If $e_{in} > c_{in}$ and $e_{out} < c_{out}$ \textbf{or} $e_{in} = c_{in}$ and $e_{in}$ is pointing away from $v$ \textbf{or} $e_{out} = c_{out}$ and $e_{out}$ is pointing towards $v$, label vertex "with root"
        \item If $e_{in} < c_{in}$ and $e_{out} > c_{out}$ \textbf{or} $e_{in} = c_{in}$ and $e_{in}$ is pointing towards $v$ \textbf{or} $e_{out} = c_{out}$ and $e_{out}$ is pointing away from $v$, label vertex "away from root
    \end{itemize}
    \item A min query is run on the edges that have endpoints with different labels.
\end{enumerate}

This is the reason why we store an additional Euler tour edge value for all neighbours, as it allows the machines to determine if edges fall on different sides of the cut.

Since each step only requires $O(1)$ broadcasts, we are done.

\subsection{Initialisation}\label{initial}

In the work by Klauck et al.~\cite{klauck15}, to demonstrate the power of their conversion theorem, they described how a Boruvka style component merging approach could allow us to construct an MST in $\tilde{O}(n/k)$ rounds. Using our Rerouting Lemma, the same approach yields the following:

\begin{theorem}
    We can construct an MST in the $k$-machine model in $O(n/k + \log{n})$ communication rounds.
\end{theorem}

The proof of this is a straightforward simulation of the Boruvka style MST algorithm using our rerouting lemma. A full proof is available in the appendix.

To complete the usage of Euler tours in our Dynamic MST problem, we demonstrate that the Euler tour structure can be initialised in the same initial round complexity. 

However, notice that a naive implementation, merging the Euler tour data structures as the components are merged is not sufficient. Our merge procedure only allows us to complete merges in pairs, but the merges required after a phase of the component merging algorithm could involve an arbitrary number of trees. The dependencies that might result would not guarantee the round complexity required. In a round where we would have to merge three components $c_1,c_2,c_3$ in a line, our previous approach would not allow us to complete this in a single round, since we would have to first merge the first two, then merge the resulting two components. \footnote{An alternate approach is to find a maximal matching of components to merge, but this is, perhaps, simpler. And it is useful later to be able to updated multiple edges in the MST at once.}

We demonstrate that we are able to merge $k$ Euler tours in $O(1)$ communication rounds.  Specifically, we prove the following lemma about $k$-way merging:

\label{sec:kwaymerge}

\begin{lemma}
Consider any forest $F$ with an Euler tour structure over each individual tree. Given a set of $k$ MST edge additions or $k$ MST edge deletions that do not create cycles, we can complete all said updates in $O(1)$ communication rounds.
\end{lemma}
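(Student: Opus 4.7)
The plan is to handle the additions and the deletions separately, in each case reducing the task to $O(k)$ broadcasts arranged in $O(1)$ dependency levels so that Lemma \ref{reroutinglem} yields $O(1)$ rounds.

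For the $k$ edge additions case, I view the situation through a \emph{meta-forest} $H$: the meta-vertices of $H$ are the original trees of $F$, and the $k$ added edges become the edges of $H$. Since the additions create no cycles, $H$ really is a forest on at most $k+1$ meta-vertices. First, each added edge broadcasts the identities of the two trees it connects (using a canonical root-identifier of each tree), so $O(k)$ broadcasts in one level deliver $H$ to a leader machine. The leader, using only local computation, picks a root for each meta-tree and produces an Euler tour of $H$; for each original tree $T_i$ inside a meta-tree this determines (i) the meta-entry vertex of $T_i$, (ii) an additive offset $\Delta_i$ to apply to all Euler-tour values inside $T_i$, and (iii) the total size of the final merged tour to which $T_i$ belongs. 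The leader then broadcasts these triples back, which is again $O(k)$ broadcasts in $O(1)$ levels. Each machine updates the Euler-tour values of its local edges in $T_i$ exactly as in Lemmas \ref{lem4} and \ref{lem6}, i.e., first re-root $T_i$ at its meta-entry vertex, then shift by $\Delta_i$, and finally assign the spliced values to the new MST edges using the splicing rule of Lemma \ref{lem6}.

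For the $k$ edge deletions case, I group the deleted edges by the tree they lie in. Inside a single tree $T$, the deleted edges $c^1, \ldots, c^t$ have pairs $(c^j_{in}, c^j_{out})$ that by Lemma \ref{lem1} form a well-nested family of intervals on the Euler cycle: any two of them are either disjoint or one contains the other. Sorting the $2t$ endpoints turns the deletion pattern into a balanced parenthesis string, and matching parentheses identifies, for each deleted edge, exactly which arcs of the cycle together form the Euler tour of each new sub-tree (the sub-tree associated with a deleted edge $c^j$ is the union of the arc strictly between $c^j_{in}$ and $c^j_{out}$ minus the intervals of the deletions nested immediately inside $c^j$). First the deleted edges broadcast their $(c_{in}, c_{out})$ values, using $O(k)$ broadcasts in one level. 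The leader computes the parenthesis matching locally and produces, for every new sub-tree, the piecewise-linear shift function (a direct generalisation of the functions $f$ and $g$ from Lemma \ref{lem5}) and its new tour size. The leader then broadcasts this table back in $O(k)$ broadcasts, and each machine updates its local edges and its neighbour-edge representatives accordingly; whenever a neighbour-edge representative was one of the deleted edges, the affected vertex broadcasts a replacement edge, which contributes at most another $O(k)$ broadcasts.

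The main obstacle is on the deletion side: ensuring that after many (possibly deeply nested) simultaneous cuts, the remaining arcs of each tree are re-labelled into a single coherent Euler tour on each resulting sub-tree. The parenthesis-matching on the sorted endpoints is what makes this clean, because it gives each sub-tree a closed-form piecewise-linear relabelling that every machine can evaluate locally from the broadcast table of $(c^j_{in}, c^j_{out})$ pairs, without further communication. Since both cases reduce to $O(k)$ broadcasts arranged in a constant number of dependency levels, Lemma \ref{reroutinglem} gives the claimed $O(1)$ communication rounds.
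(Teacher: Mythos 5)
Your overall architecture matches the paper's: reduce everything to $O(k)$ broadcasts arranged in $O(1)$ dependency levels and invoke Lemma \ref{reroutinglem}. The paper is organizationally simpler, though: it broadcasts, for every update, the outgoing-edge Euler values at both endpoints, the sizes of the affected tours, and (for deletions) the deleted edge's own values, and then has \emph{every} machine deterministically replay the $k$ updates in a fixed (lexicographic) order as a sequence of pairwise merges/splits via Lemmas \ref{lem5} and \ref{lem6}, updating the broadcast values as it goes. Since all machines see the same data and use the same order, they all compute the same final labelling with no leader and no second broadcast phase. Your leader-plus-table variant also fits the budget, and your laminar/parenthesis analysis of the deletion intervals is a nice closed form (it is essentially the component-labelling the paper uses later for batch deletions), but note that your first broadcast level for additions must carry the tree sizes and the attach-point Euler values, not just tree identities, or the leader cannot compute any offsets.

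There is one step that fails as literally stated: in the addition case you claim the leader's plan gives, for each original tree $T_i$, ``an additive offset $\Delta_i$ to apply to all Euler-tour values inside $T_i$.'' That is false whenever several new edges attach to the same tree. If $T_1$ has children $T_2$ and $T_3$ attached at tour positions $a_1 < a_2$, the merged tour detours into $T_2$ at $a_1$ and into $T_3$ at $a_2$, so edges of $T_1$ with value in $[a_1,a_2)$ shift by $s_2+2$ while those with value $\geq a_2$ shift by $s_2+s_3+4$: the relabelling of $T_1$ is piecewise, with one breakpoint per attachment, not a single offset. Your own escape hatch --- applying the splicing rule of Lemma \ref{lem6} once per attachment, in order --- is exactly the repair, and is precisely what the paper's ``perform the updates in order'' does; but then the object the leader must communicate is the full set of attach points and subtree sizes (which the raw broadcasts already contain), at which point the leader is doing no work that each machine could not do itself. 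You should also pin down that the Euler tour of the meta-forest $H$ chosen by the leader must be the one induced by visiting children in increasing order of their attach-point values (or some other convention every machine can reproduce), otherwise the per-tree relabellings will not be mutually consistent.
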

\begin{proof}
Suppose these updates are ordered. (If they are not ordered, order them lexicographically.) To complete $k$ updates at once, we do the following:

\begin{enumerate}
    \item For each edge being added and deleted, we broadcast:
    \begin{itemize}
        \item An outgoing edge's Euler tour values from each endpoint.
        \item The size of the Euler tour of each endpoint.
        \item The Euler tour values of the edge if it is a deleted edge.
    \end{itemize}
    \item Each machine performs the updates in order, updating the above three values as necessary.
\end{enumerate}

Notice that at any point in time, combining two Euler tours, or separating two Euler tours only requires the above three values to be broadcast. Each machine can keep track of these values, and update them as necessary throughout the process to ensure that they are still relevant after merges and separations. Notice that outgoing edges are only involved in the edge addition case, and as such will never be deleted.

Additional work to update the the Euler tour information of neighbours only has to be completed if edges are deleted. Since at most $O(k)$ such vertices are affected, we can just broadcast them all at the end of the process.

Since each step only requires $O(k)$ broadcasts, by our rerouting lemma \ref{reroutinglem}, this can be achieved in $O(1)$ communication rounds.
\end{proof}

This establishes the procedure for initialising the Euler tour trees, and our entire algorithm is complete, and our algorithm is complete. As a result, we have proven Theorem \ref{mstthm}, the main result for this section.

Notice that this $k$-way merging lemma allows us to initialise the MST irregardless of how the MST is built. As such, independent of how the MST is determined, this process always takes $O(n/k)$ communication rounds. 

We notice here that this problem seems to lend itself well to batch updates. Updates to the tree, as well as broadcasts done to determine which edges are to be deleted or added to restore the MST only require $O(1)$ broadcasts each. This seems to suggest the possibility of resolving $O(k)$ updates in $O(1)$ communication rounds if dependencies could be avoided. 

\section{Batch Dynamic MST}

In this section, we present our main contribution: the batch dynamic minimum spanning tree algorithm. For the batch dynamic MST problem, we have $N$ updates arrive, with each update only arriving at the two machines where the updated  edge resides. The algorithm is required to determine the MST after these $N$ updates are resolved, where each machine knows which of their edges is in the MST.  The main goal of this section is to prove the following theorem:

\begin{theorem}
There is a dynamic MST algorithm in the $k$-machine model that can satisfy $k$ dynamic edge updates in $O(1)$ communication rounds, initialisation in $O(n/k + \log{n})$ rounds (if the graph is initially non-empty), while using $\max\{{k,m/k + \Delta}\}$ space, i.e., at most a constant factor more space more than the original space necessary to store the graph $G$.  The algorithm is deterministic worst case $O(1)$ in the edge addition case, and is a Las Vegas randomized algorithm for the edge deletion case, completing in $O(1)$ rounds with high probability for each attempt.
\end{theorem}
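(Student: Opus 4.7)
The plan is to partition the batch of $k$ updates into additions $E^+$ and deletions $E^-$, process each in a separate phase, and finish each phase by invoking the $k$-way merge lemma of Section~\ref{sec:kwaymerge} to apply the resulting MST changes to the Euler tour structure in $O(1)$ rounds. Running $E^-$ first and then $E^+$ (or vice versa) yields the correct final MST because the underlying graph is the same. Initialisation piggybacks on the Boruvka-style MST algorithm of Section~\ref{initial}: each Boruvka phase merges $\Theta(k)$ components simultaneously, and the $k$-way merge lemma keeps the Euler tour structure in sync, giving $O(n/k + \log n)$ rounds total. The space bounds are immediate because every augmentation (the Euler-tour labels and the per-neighbour representative edge) is within a constant factor of the existing edge storage.

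For the addition phase, I would broadcast the endpoints, weights, and Euler-tour summaries of all edges in $E^+$; with $|E^+| \le k$ this fits into $O(1)$ rounds by Lemma~\ref{reroutinglem}, so each machine holds a global picture of the added edges. Then sort $E^+$ by weight using Lenzen's primitive, and for each added edge $(u,v)$ decide whether it belongs to $\mathrm{MST}(T \cup E^+)$: $(u,v)$ enters iff some edge on the $u$-to-$v$ path in $T$ is heavier than $(u,v)$, and the heaviest such edge is evicted. Using Lemma~\ref{lem3}, every machine locally classifies each of its MST edges against each of the $O(k)$ relevant $u$-to-$v$ paths; the resulting per-path maxima and replacement decisions require another $O(k)$ broadcasts, again $O(1)$ rounds via Lemma~\ref{reroutinglem}. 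The step is deterministic.

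For the deletion phase, removing $E^-$ from $T$ leaves a forest with at most $k+1$ components, and we must find the lightest remaining graph edge across each induced cut. I would reduce this to an MST problem on the contracted super-graph whose super-nodes are the surviving components and whose super-edges are the remaining edges of $G$ with their original weights, then simulate the Jurdzinski--Nowicki CONGESTED-CLIQUE MST algorithm on $\le k+1$ super-nodes. Labelling every real vertex with its super-node in $O(1)$ rounds is a batched extension of the cut test of Section~\ref{edgedeletionsection}, again fitting within the broadcast budget of Lemma~\ref{reroutinglem}.

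The main obstacle is simulating Jurdzinski--Nowicki efficiently: naively assigning each super-node to a single machine and gathering its edges would cost $\Omega(k)$ rounds, because each super-node's real vertices are spread over $\Theta(k)$ machines under the random vertex partition. To reach $O(1)$ rounds I would instead distribute the simulated work across all machines hosting a super-node's real vertices, exploiting that this spread is approximately uniform with high probability. Each per-round CONGESTED-CLIQUE message a super-node would send or receive is partitioned among its host machines and delivered by Lenzen routing in $O(1)$ rounds, while Chernoff bounds control the per-machine load; the Las Vegas character arises from retrying whenever a load threshold is exceeded, and each attempt succeeds with high probability. Composing the two phases and the final $k$-way merge gives $k$ updates in $O(1)$ rounds, completing the proof.
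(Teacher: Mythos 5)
Your overall architecture (split into addition and deletion phases, finish with the $k$-way merge, initialise via Boruvka) matches the paper, but both phases have genuine gaps at exactly the points where the paper does its real work. For additions, your rule ``$(u,v)$ enters iff some edge on the $u$-to-$v$ tree path is heavier, and the heaviest such edge is evicted'' treats the $k$ added edges independently, and this fails: several added edges can have the \emph{same} tree edge as the maximum of their fundamental paths, so you would insert $k$ edges while evicting fewer than $k$, creating cycles; moreover an added edge can be excluded from the new MST because it is the heaviest edge on a cycle formed with \emph{other added edges}, which your criterion (which only inspects the fundamental cycle with respect to $T$) never detects. The paper's fix is a structural lemma you are missing: the MST edges lying on some fundamental cycle can be decomposed into $O(k)$ \emph{disjoint} path segments (cut at the $O(k)$ endpoints of new edges and the $O(k)$ branching vertices of the induced subforest), each of which can lose at most one edge. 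Taking the max of each disjoint segment yields $O(k)$ candidate evictions, and the contracted graph on these plus the $k$ new edges has size $O(k)$ and can be shipped to every machine and solved locally. Without disjointness and the ``at most one eviction per segment'' guarantee, per-path maxima do not determine the new MST.

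For deletions, you correctly name the reduction to a $(k{+}1)$-node CONGESTED-CLIQUE MST instance, but you misplace the bottleneck. The expensive step is not simulating the rounds of Jurdzi\'nski--Nowicki; it is \emph{constructing its input}: each machine may hold candidate edges spanning up to $\Theta(k^2)$ component pairs, so naively computing the minimum-weight edge between every pair of components costs $k^2$ min-queries, i.e.\ $\Omega(k)$ rounds, and your plan of distributing a super-node's work over its host machines does not reduce this volume. The paper's key observation is that each machine can locally run cycle deletion on its own candidate edges over the $k{+}1$ components, leaving at most $k$ candidates per machine; only then do Lenzen sorting, deduplication, and routing of component $i$'s $O(k)$ surviving edges to machine $i$ fit in $O(1)$ rounds. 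Also, the Las Vegas behaviour comes from the randomized MST subroutine itself, not from load-balancing retries. Your initialisation and space arguments are fine.
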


For both edge additions and deletions, our k-way updating algorithm described in Section~\ref{sec:kwaymerge} allows us to reconstruct the trees as necessary once we know which edges to add or remove.  As such, we only have to describe the procedure to determine which those edges are.

\subsection{Edge Additions}

We now prove the following lemma:

\begin{lemma}
Given a set of $k$ edge updates, we can determine the new MST in $O(1)$ communication rounds. Each machine will know if each of its edges are in the MST or not. 
\end{lemma}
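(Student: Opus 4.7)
The plan is to reduce the batch insertion to an MST computation on an auxiliary graph of size $O(k)$, exploiting the fact that adding $k$ edges to a tree changes at most $k$ MST edges. First, the endpoints, weights, and parent-edge Euler-tour values of the $k$ new updates are broadcast to every machine. Using Lemma~\ref{reroutinglem}, the $O(k)$ broadcasts finish in $O(1)$ rounds, so every machine knows the full update list together with enough Euler-tour data to reason about the at most $2k$ marked endpoints.

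Next, I would have each machine locally construct the compressed Steiner (``virtual'') tree $\tilde T$ over the marked endpoints. Two ingredients make this possible in $O(1)$ rounds: ancestor/descendant relations among marked vertices are readable directly from the parent-edge intervals $[p_{min},p_{max}]$ via Lemma~\ref{lem3} (one interval nests the other exactly when one marked vertex is an ancestor of the other), and the LCA of any pair of marked vertices corresponds to the deepest real tree edge contained in both their root-paths, which can be recovered by a max query of the form ``largest $e_{min}$ with $e_{min} \leq \min(p_{i,min},p_{j,min})$ and $e_{max} \geq \max(p_{i,max},p_{j,max})$''. There are $O(k)$ such LCA queries (for consecutive marked pairs in Euler-tour order), and they aggregate in parallel by Lemma~\ref{reroutinglem}, yielding $\tilde T$ on $O(k)$ nodes and $O(k)$ edges.

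Each edge $(a,b)$ of $\tilde T$ represents a path in $T$, and I want its auxiliary weight to be the heaviest real tree edge on that path, together with the identity of that witness edge. By Lemma~\ref{lem3}, each machine locally classifies each of its MST edges as lying on at most one $\tilde T$-path (since no two $\tilde T$-paths share a real tree edge), then contributes to $O(k)$ parallel max computations, again $O(1)$ rounds via Lemma~\ref{reroutinglem}. The auxiliary graph $H = \tilde T \cup E'$ has size $O(k)$, so gathering it at a leader and computing MST$(H)$ locally is essentially free. The key structural claim is that MST$(T \cup E')$ is obtained from $T$ by (i) deleting the witness heaviest-path-edge of each $\tilde T$-edge that is absent from MST$(H)$, and (ii) inserting each new edge that is present in MST$(H)$. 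This follows because every internal vertex of a $\tilde T$-path is non-marked and therefore has degree $2$ in $T \cup E'$, forcing all but at most the heaviest incident path edge to remain in any MST, so suppressing degree-$2$ internal vertices and relabeling each path by its heaviest-edge surrogate preserves MST membership of all surviving edges.

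The main obstacle is the virtual-tree step: one has to verify that non-marked branching Steiner points are identified correctly from Euler-tour values alone, and that the assignment of each real MST edge to a unique $\tilde T$-edge is well-defined and locally computable. Once that assignment is in hand, Lemma~\ref{reroutinglem} absorbs all of the $O(k)$ broadcasts, LCA queries, and path-max queries into $O(1)$ rounds, and the resulting $O(k)$ tree-edge deletions plus $O(k)$ new-edge insertions feed straight into the $k$-way merge of Section~\ref{sec:kwaymerge} to update the Euler tour representation, also in $O(1)$ rounds.
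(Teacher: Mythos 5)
Your proposal follows essentially the same route as the paper: decompose the MST into $O(k)$ paths delimited by the new-edge endpoints ($A$) and the branching Steiner vertices ($B$), take one path-maximum per path, solve the resulting $O(k)$-size contracted MST instance locally on every machine (or a leader), and feed the resulting deletions and insertions into the $k$-way merge, with Lemma~\ref{reroutinglem} absorbing all $O(k)$ broadcasts and max computations into $O(1)$ rounds. The only substantive differences are in the details: you identify the Steiner points via LCA queries on Euler-tour intervals where the paper has each vertex check its degree among shortest-path edges after simulated reroots, and your claim that internal path vertices have degree $2$ in $T \cup E'$ should really be stated for the cycle-containing subforest $M'$ (pendant bridge subtrees can raise the degree in $T \cup E'$, but since they are forced into every spanning tree the conclusion that at most one edge per path may leave the MST --- the paper's key lemma --- still holds).
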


When $k$ edges are added, it is not immediately clear how we can simultaneously find a set of $k$ edges to delete so that the remaining graph is both cycle free and connected. For example, if we were to pick the original $k$ cycles induced by a single new edge, as well as the existing MST edges, the maximal weight edges in all these cycles might be the same edge. Here in figure \ref{fig:kcomponentsgraph}, where the bold lines represent edges in the original MST and the dotted lines represent new edges being added, the edge labeled $(2,19)$ is in all three cycles, and might be the only edge deleted, if it were the heaviest weight edge in the graph.

It is also difficult to describe the cycles to run max queries on. Cycles could be described through the series of added edges they pass through, but such descriptions could be of length $\Theta(k)$. 

\begin{figure}[h]
    \centering
    \includegraphics[width=8cm]{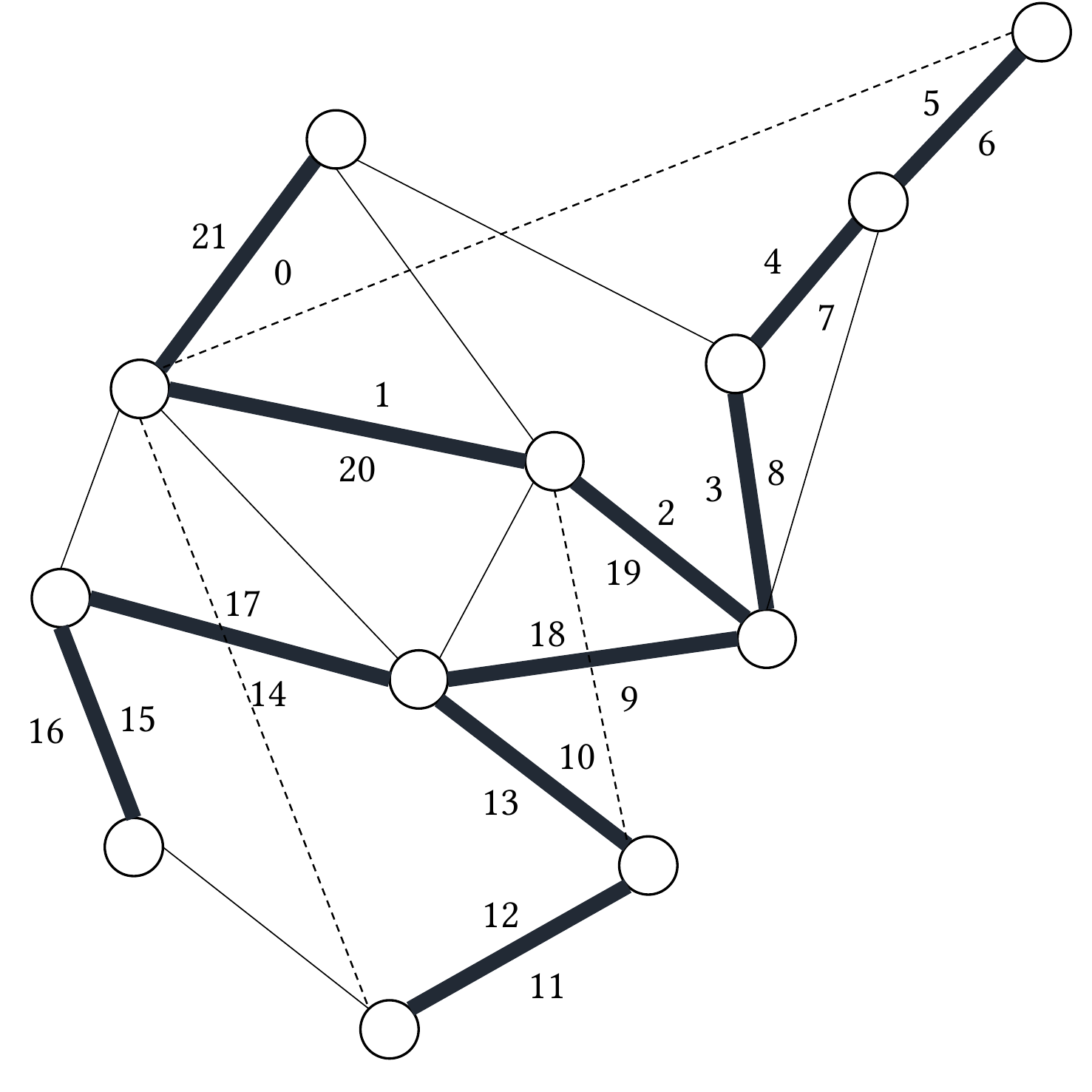}
    \caption{Example 2: Bold edges are edges in the MST, solid edges are edges in the graph $G$, dotted edges are new edges being added}
    \label{fig:kcomponentsgraph}
\end{figure}

The main insight is to notice that there are only essentially $O(k)$ edges that matter. We first begin with some intuition as to what this means. Consider again Figure~\ref{fig:kcomponentsgraph}. We first remove edges that are not in any cycles, since they are irrelevant, and can never be considered for removal. We look at the graph as if it were the original MST, with $k$ additional edges attached. In Figure \ref{fig:reducedgraph}, we can see this process in action. 

\begin{figure*}[h]
    \centering
    \includegraphics[width=\textwidth]{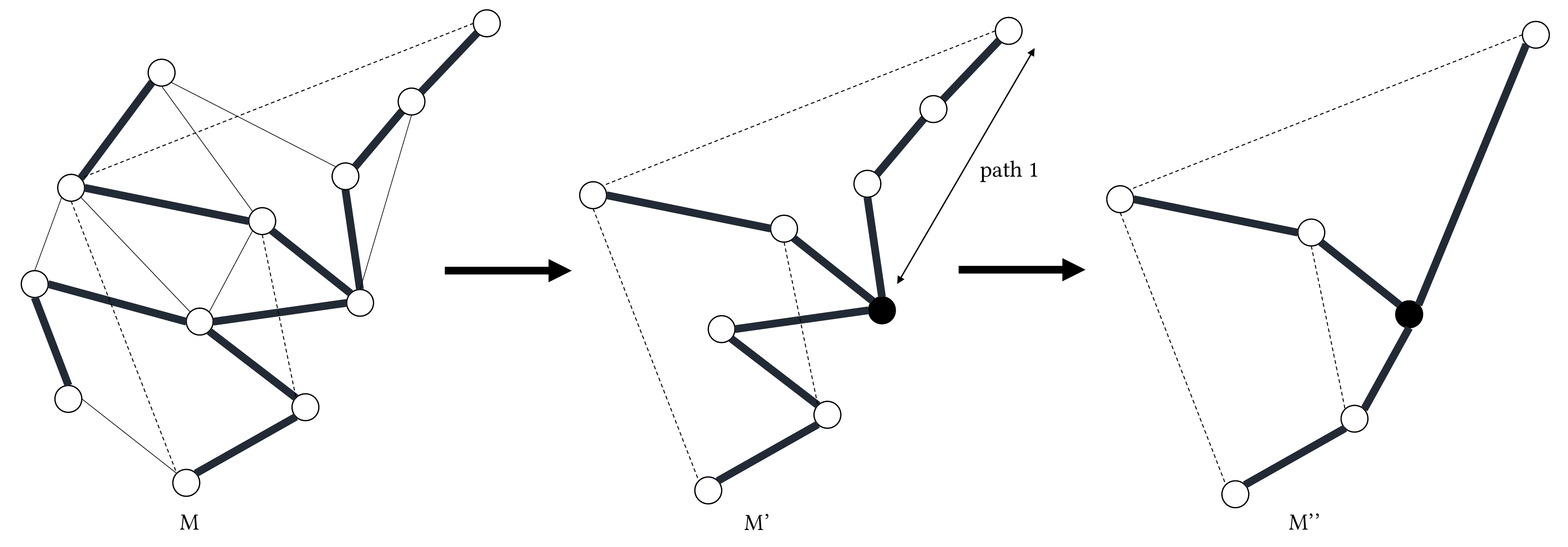}
    \caption{Example 2, removing irrelevant edges to obtain $M'$, then contracting to obtain $M''$. The shaded vertex is the sole vertex in $B$}
    \label{fig:reducedgraph}
\end{figure*}

Crucially, we wish to decompose the original MST into $O(k)$ non-intersecting paths such that at most one edge can be removed from each of the paths. As an example, refer to the decomposition of example 2 into the 5 paths described described by the third image in figure \ref{fig:reducedgraph}. Notice for example, that amongst the three edges in path 1, only one of the three edges can be deleted, if not the graph will become disconnected. After which, we can consider the contracted graph to the right, and solve the MST problem on that graph instead.

We now prove the key claim of this section:

\begin{lemma}
Given any MST, and any set of $k$ edges to connect vertices in the MST, we can decompose the edges of the MST into $O(k)$ disjoint sets such that:
\begin{itemize}
    \item At most one edge from each set can be removed while maintaining connectedness in the MST and the new edges.
    \item Each edge is in some set.
\end{itemize}
\end{lemma}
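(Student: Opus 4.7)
The plan is to isolate the portion of $M$ that is genuinely in play when the $k$ new edges are added. Let $S$ be the set of endpoints of the $k$ new edges, so $|S|\le 2k$. Call an MST edge \emph{relevant} if it lies on the tree path in $M$ between the endpoints of some new edge, and let $M'$ be the subforest of relevant edges. Note that irrelevant edges are bridges of $M\cup\{\text{new edges}\}$ and so cannot be removed at all; the delicate part of the argument is only about $M'$.

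The heart of the proof is a structural decomposition of $M'$. Every leaf of $M'$ must belong to $S$, since otherwise it could be pruned without affecting any $S$--$S$ path. Hence $M'$ has at most $2k$ leaves, which in a forest forces at most $2k-2$ vertices of degree $\ge 3$. Let $I \subseteq V(M')$ consist of $S$ together with the vertices of degree $\ge 3$ in $M'$, so $|I| = O(k)$. Splitting $M'$ at every vertex of $I$ partitions the edges of $M'$ into $O(k)$ internally vertex-disjoint paths $P_1,\dots,P_t$, where each internal vertex of every $P_j$ has $M'$-degree exactly $2$ and does not lie in $S$. These paths are the initial candidates for the sets in the decomposition.

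I would then prove the key claim: no connectivity-preserving deletion can remove two edges from the same path $P_j$. Suppose, for contradiction, that two edges of some $P_j$ are removed, and let $W$ be the set of vertices strictly between them along $P_j$. Each $w\in W$ has $M'$-degree $2$ with both $M'$-edges lying on $P_j$, and $w \notin S$, so no new edge is incident to $w$. The only other edges touching $W$ are irrelevant MST edges, each of which hangs off $M$ via a subtree attached to $M'$ at a single vertex; by definition of irrelevance, no new edge spans from such a subtree to the outside. Thus $W$ together with its attached irrelevant subtrees becomes a full connected component once the two edges are cut, contradicting connectedness of $M\cup\{\text{new edges}\}$ after the update.

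Finally, I would fold the irrelevant edges into the $O(k)$ existing sets. Each maximal subtree of irrelevant MST edges attaches to $M'$ at a single vertex $v$, and $v$ must lie in $I$; I would append all edges of such a subtree to any one of the path-sets $P_j$ incident to $v$. After this, every MST edge lies in exactly one set, the total number of sets is still $O(k)$, and since irrelevant edges contribute zero removable edges and each original path contributes at most one, the ``at most one edge per set'' property is preserved. I expect the main obstacle to be the structural claim about paths in the third paragraph; the rest is essentially counting leaves and branch vertices and a little bookkeeping to absorb the bridges.
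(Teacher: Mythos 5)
Your decomposition is the same as the paper's: strip the MST edges not lying on any cycle, split the remaining forest $M'$ at the new-edge endpoints and the degree-$\ge 3$ vertices, and count $O(k)$ paths via the leaf bound. Where you genuinely diverge is the key sub-claim that no path admits two deletions: the paper argues by following a surviving $x$--$w$ path and locating a vertex $s$ that is forced into $A$ or $B$, whereas you give a direct cut argument --- the vertices $W$ strictly between the two deleted edges, together with their pendant irrelevant subtrees, have no new edge and no surviving MST edge leaving them, so they form a disconnected component. Your version is cleaner and easier to verify than the paper's (whose $s$/$t$ chase is somewhat delicate), and it makes explicit the fact that no new edge can have an endpoint inside an irrelevant subtree, which the paper leaves implicit. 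One small inaccuracy in your bookkeeping step: the attachment vertex $v$ of a maximal irrelevant subtree need \emph{not} lie in $I$ --- it can be an internal degree-$2$ vertex of a path (e.g.\ a pendant edge hanging off the middle of a relevant path); but this is harmless, since irrelevant edges are bridges and contribute no removable edge to whatever set you fold them into (the paper sidesteps this by dumping all of them into a single extra set).
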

\begin{proof} 
To perform this decomposition, we first remove all edges that are not part of cycles, and place them all in one set. Call the remaining forest $M'$. We split $M'$ into paths by the following set of vertices:

\begin{itemize}
    \item Vertices that are one endpoint of the $k$ edges being added, call this set $A$.
    \item Vertices that have degree more than 2 in $M'$, for example the shaded vertex in example 2, call this set $B$.
\end{itemize}

$M'$ consists of all edges which are a part of a cycle, which are edges that are on the shortest path from some two vertices in $A$. Any leaf in this forest must be some element in $A$, if not the edge connecting to that leaf cannot be on the shortest path from some two vertices in $A$, and hence cannot be involved in a cycle.

Now, since $A$ is maximally of size $2k$, we have that the number of leaves in $M'$ is at most 2k. Since $B$ consists of the vertices that have degree two in $M'$, the number of elements in $B$ is bounded by the number of leafs in $M'$ by a degree double count. Hence, $|A| + |B| = O(k)$. Trivially the sets are disjoint.

We can now think of the induced tree $T$, with the vertices being the elements in $A$ and $B$, and the edges being the paths connecting elements in $A\cup B$ and $A \cup B$ in $M'$. Since this new graph is a forest, the number of edges it can have is bounded by the number of vertices, and is hence $O(k)$. Hence the number of sets constructed is $O(k)$. In figure \ref{fig:reducedgraph}, the graph $M''$ consists of the solid edges of $T$, and the dotted edges that are the newly added edges.

Next, notice that each of these sets is a path from some element in $A\cup B$ to some element in $A \cup B$. We wish to show that at most one edge can be removed from any such set. Suppose otherwise, and the two edges that can be removed are the edges $(w,x)$ and $(y,z)$ appearing in that order on the path. Since the remaining graph is still connected, there must be some path in $M$ from $x$ to $w$, not passing through $y$. Consider any such path, and let $s$ be the last vertex on the path from $w$ to $z$ that is visited on this path during the first time it leaves the path. Consider the partitioning of the MST induced by the edge $(x,w)$, and let $t$ be the last vertex visited in the component with $w$ on this said path, at the first time it leaves said component. 

Now, $t$ cuts across this partition, and the edge it crossed the partition with is not part of the MST, so it is one endpoint of one of the $k$ edges, and is in $A$. As such, $s$ is then either $t$, and is in $A$, and we obtain a contradiction, or the path from $s$ to $t$ consists of edges that are part of cycles, and the degree of $s$ is greater than 2, and $s$ is in $B$, also a contradiction. 

As such, each of these sets can have at most one edge removed, whilst maintaining connectivity in the original MST edges and the $k$ new edges. 

Lastly, since each edge that is part of some cycle is in these $O(k)$ sets, and all the other edges are in the first set, all edges are part of some set and we are done. 
\end{proof}

Our strategy is to run a max-query on each of these $O(k)$ sets. After which, we only have to consider these $O(k)$ edges, as well as the original new $k$ edges that are being added, and solve a contracted MST $M''$ of size $O(k)$ that can fit on a single machine. All these edges, as well as their endpoints can then be sent to all machines, and then each machine can resolve the MST on their own simultaneously. We reduce the problem to a problem on the contracted graph $M''$ with only $O(k)$ edges. 

The algorithm goes in rounds as follows:
\begin{enumerate}
    \item All the $k$ new edges being added are broadcast to all machines, so that all machines know the set $A$.
    \item Vertices in $A$ broadcast the Euler tour values of one of its edges.
    \item Vertices in $A$ determine if their edges are part of shortest paths between any elements in $A$, and broadcast all such edges.
    \item All vertices determine if they are in $B$.
    \item All vertices in $A$ and $B$ broadcast the Euler tour values of all edges connected to them that are part of a shortest path. 
    \item All machines build a picture of the tree induced, and conducts $O(k)$ max queries for the $O(k)$ sets.
    \item All of the maximums in the $O(k)$ sets are broadcast, and each machine determines the new MST, and the edges to be deleted.
    \item Euler tours are updated.
\end{enumerate}

We now describe how steps 3 and 4 work in detail. In step 3, vertices in $A$ determine if their edges are part of shortest paths between any two vertices in $A$. Notice that all such edges are on the shortest path from themselves to some other element in $A$. Hence, to determine if their edges are shortest path edges, they simulate the rerooting process, rerooting the tree to each of the other possible values in $A$, and checking to see if the edges they have are indeed parent edges with respect to some other member of $A$. The edges that are parent edges after some reroot to some element in $A$ are the edges that are on shortest paths. 

Notice too that there are only $O(k)$ such edges, since there are only $O(k)$ paths in $M'$. Hence broadcasting all these edges will take $O(1)$ communication rounds.

To determine if a vertex in $B$, it has to check that it has degree larger than $2$ in the graph induced only by shortest paths. To do so, it has to check that it has at least 3 edges connecting to it that are on shortest paths between elements in $B$.

Recall Lemma~\ref{lem3}, which states that if $r$ is the root of the Euler tour, then an edge $e$ is on the path between $r$ and $s$ iff $e_{in} < p_{in}$ and $e_{out} > p_{out}$, where $p$ is the parent edge of $s$. However, we cannot directly apply this result, since the values of $c_{in}$ and $c_{out}$ are only known to the machine hosting $s$ after rerooting the tree. To obtain all the values $c_{in}$ and $c_{out}$ would require up to $k$ broadcasts for each of them, for each other possible value of $r$ in $A$, for a total of $k^2$ broadcasts.

What is important is that the edge that is broadcast after the rerooting process is always the parent edge. As such, to avoid this problem, each machine simulates the rerooting process, and determines what the values of $c_{in}$ and $c_{out}$ would be, since they have been given all parent edges in step 3. 

To complete step 4, for each vertex $v$, for each edge connected to it, the machine checks for all the pairs of values in $A\backslash\{v\}$, and simulates the tree rerooting process, and checks to see if the edge is indeed on the shortest path between the two vertices. Now, each vertex will know its degree, and can determine if it's in $B$. 

After the sets $A$ and $B$ are determined, all vertices broadcast all the parent edges of any member of $A \cup B$ with respect to any member of $A \cup B$. Again, since there are only $O(k)$ intervals, there can only be $O(k)$ such values. 

In step 6, given the values of the Euler tour, each machine can independently build a picture of the induced tree, by placing the edges and vertices in the correct order. It then for each of the $O(k)$ sets, determines membership of the set using lemma \ref{lem3}, since it has all parent edges. It then finds the maximum weight edge in this set, and sends it to some machine for collation to find a global maximum in each set. Notice here that we can assign which machine does the collation deterministically, we simply order the paths based on the order in which they appear in the Euler tour, and take mod $k$. This results in $O(k)$ max queries that can be completed in $O(1)$ communication rounds.

After which, each machine knows precisely which edges are relevant, and can obtain the new MST. We simply delete the correct edges, with ties broken by lexicographical order, and maintain the Euler tour structure.

\subsection{Edge Deletions}

We now focus on edge deletions:

\begin{lemma}
Given a set of $k$ edge deletions, after $O(1)$ communication rounds with high probability, we can determine the new MST. Each machine will know if each of its edges are in the MST or not. 
\end{lemma}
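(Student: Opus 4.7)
The plan is to reduce the problem to a minimum spanning tree computation on an auxiliary graph of only $O(k)$ super-nodes, and then solve that instance by a careful simulation of the constant-round CONGESTED-CLIQUE MST algorithm of Jurdzinski and Nowicki~\cite{jurdzinski17}. Deletions of non-MST edges can be handled locally on the hosting machines and ignored for what follows, so I focus on the at most $k$ deleted MST edges. Each such edge broadcasts its Euler tour labels $(c_{in}, c_{out})$; by Lemma~\ref{reroutinglem} this takes $O(1)$ rounds. Every machine now holds the same collection of $\leq k$ nested or disjoint intervals, and by iterating Lemma~\ref{lem1} against this interval structure it can locally label, with no further communication, which of the $\leq k+1$ surviving MST components every one of its own vertices and edges belongs to.

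Next I would build the auxiliary graph $H$ whose nodes are the super-nodes (components) and whose edges are the edges of $G$ joining distinct components; a minimum spanning forest of $H$, merged back with the surviving MST edges, gives the new MST of $G$. Assign each super-node to a dedicated home machine. The aim is for each home machine to hold all $H$-edges incident to its super-node, so that we can invoke an MST algorithm on $H$. A naive route-every-edge approach can take $\Theta(k)$ rounds, since a single super-node can be incident to $\Theta(m)$ edges of $G$ spread across all $k$ machines. To avoid this I would first have each machine locally group its cross-component edges by (unordered) component pair and keep only the locally minimum-weight edge per pair; this reduces each machine's outgoing message count to $O(\min(m/k, k^2))$. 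I would then use Lenzen's routing together with a balanced hierarchical reduction per pair, using the random vertex partition to argue that the candidate edges for any single pair are spread across $\Theta(k)$ machines with high probability, so the aggregation completes in $O(1)$ rounds and delivers each home machine its $O(k)$ incident $H$-edges.

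Third, I would simulate the Jurdzinski--Nowicki MST algorithm on $H$, with each of our machines playing the role of one virtual CONGESTED-CLIQUE node. The per-node bandwidth their algorithm needs, $O(k \log n)$ bits per round, is exactly the bandwidth a $k$-machine server has, so each CONGESTED-CLIQUE round is simulable in $O(1)$ of our rounds (broadcast-like patterns routed via Lemma~\ref{reroutinglem}), and the total MST computation on $H$ takes $O(1)$ rounds. Finally the $O(k)$ chosen replacement edges are broadcast back, each machine updates which of its edges belong to the MST, and the $k$-way merge from Section~\ref{sec:kwaymerge} rebuilds the Euler tour structure in $O(1)$ more rounds.

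The main obstacle is the middle step: arguing that the per-pair aggregation really does fit in $O(1)$ rounds. The component structure is only revealed after the $k$ deletions are announced, and for a particular component pair an adversarial graph can concentrate a great many cross-component edges on a small set of machines. The randomness of the vertex partition is what rescues the load-balancing argument, and this is why the final bound is a Las Vegas high-probability statement rather than a deterministic worst-case one, matching the ``more care is needed in the reduction'' warning in the introduction.
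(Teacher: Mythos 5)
Your overall architecture matches the paper's: broadcast the $k$ deleted edges' Euler tour labels, locally classify vertices and edges into the $\le k+1$ components via the interval/nesting structure, reduce to an MST instance on a contracted graph with $O(k)$ super-nodes, ship the relevant edges so that machine $i$ holds the edges incident to component $i$, and run the Jurdzi\'nski--Nowicki CONGESTED-CLIQUE algorithm. You also correctly identified the crux: a machine may hold locally-minimum candidate edges for up to $\Theta(k^2)$ component pairs, and aggregating a per-pair minimum naively costs $\Theta(k)$ rounds.

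However, your fix for that crux does not work, and this is where the proposal has a genuine gap. You propose to keep one locally-minimum edge per component pair on each machine (up to $k^2$ of them) and then argue via the random vertex partition that a ``balanced hierarchical reduction per pair'' finishes in $O(1)$ rounds. The bottleneck is not how the receivers are balanced; it is that a single machine must \emph{inject} up to $\Theta(k^2)$ distinct messages into the network, while its total outgoing bandwidth is only $k$ messages of $O(\log n)$ bits per round. No routing or aggregation scheme can get $k^2$ distinct source messages out of one machine in $O(1)$ rounds, and the randomness of the vertex partition does not reduce the number of distinct component pairs a machine's edges can span. The paper's resolution is different and is the key missing idea: since the contracted graph has only $k+1$ vertices, any spanning forest of it has at most $k$ edges, so each machine can \emph{locally} run cycle deletion on its own candidate edges (viewed as a multigraph on the components) and discard every edge that is the heaviest on some local cycle --- such an edge can never be in the global MST. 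This deterministically leaves at most $k$ candidates per machine, after which Lenzen sorting plus a deduplication pass between index-adjacent machines ensures each home machine receives only $O(k)$ edges. Note also that the high-probability guarantee in the lemma comes from the randomized Jurdzi\'nski--Nowicki subroutine, not (as your last paragraph suggests) from a load-balancing argument in the reduction; the reduction itself is deterministic.
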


Edge deletions only affect the MST if the deleted edges were originally in the MST. After $k$ edge deletions occur, our MST is decomposed into $k+1$ components defined by the deleted edges, and we have to find minimum edges that reconnect our components. This can be reduced to solving a new MST instance on $k$ machines and a graph with $k+1$ vertices. This is the same as solving the MST problem in the CONGESTED CLIQUE model(with the exception that we allow for $\log{n}$ bits of communication, over $\log{k}$ bits of communication). This problem has been solved very recently by Jurdziński and Nowicki in 2017~\cite{jurdzinski17} using a randomized approach. Their algorithm does not use more than $O(k)$ space.

To complete the reduction, we have to demonstrate how to convert our setting to the CONGESTED CLIQUE setting, where each machine has all the edges of its vertex. Notice here why this is not trivial. From our Euler tour data structure, we can tell for each edge, the two components it bridges. However, we cannot tell what the minimum weight edge that bridges any two components $i$ and $j$ are, without first conducting a min query that takes a broadcast. Doing $k^2$ min queries to determine all the edges in our new graph takes $O(k)$ rounds.

Here, we circumvent this problem by noticing the following fact. On each machine, there can only be $k$ edges that can possibly be in the MST. If there are more than $k$ edges that are candidates for the MST, there must be a cycle, and the machine knows that the largest weight edge on this cycle cannot be in the MST. Now, there are at most $k$ candidate edges on each machine, instead of $k^2$, and we can apply Lenzen's routing theorem. 

Notice also that we cannot apply Lenzen's routing theorem directly, since there might be more than $k$ edges that connect to a component. Multiple machines may have candidate edges that bridge some two components $i$ and $j$. The algorithm is as follows:

\begin{figure*}[h]
    \centering
    \includegraphics[width=\textwidth]{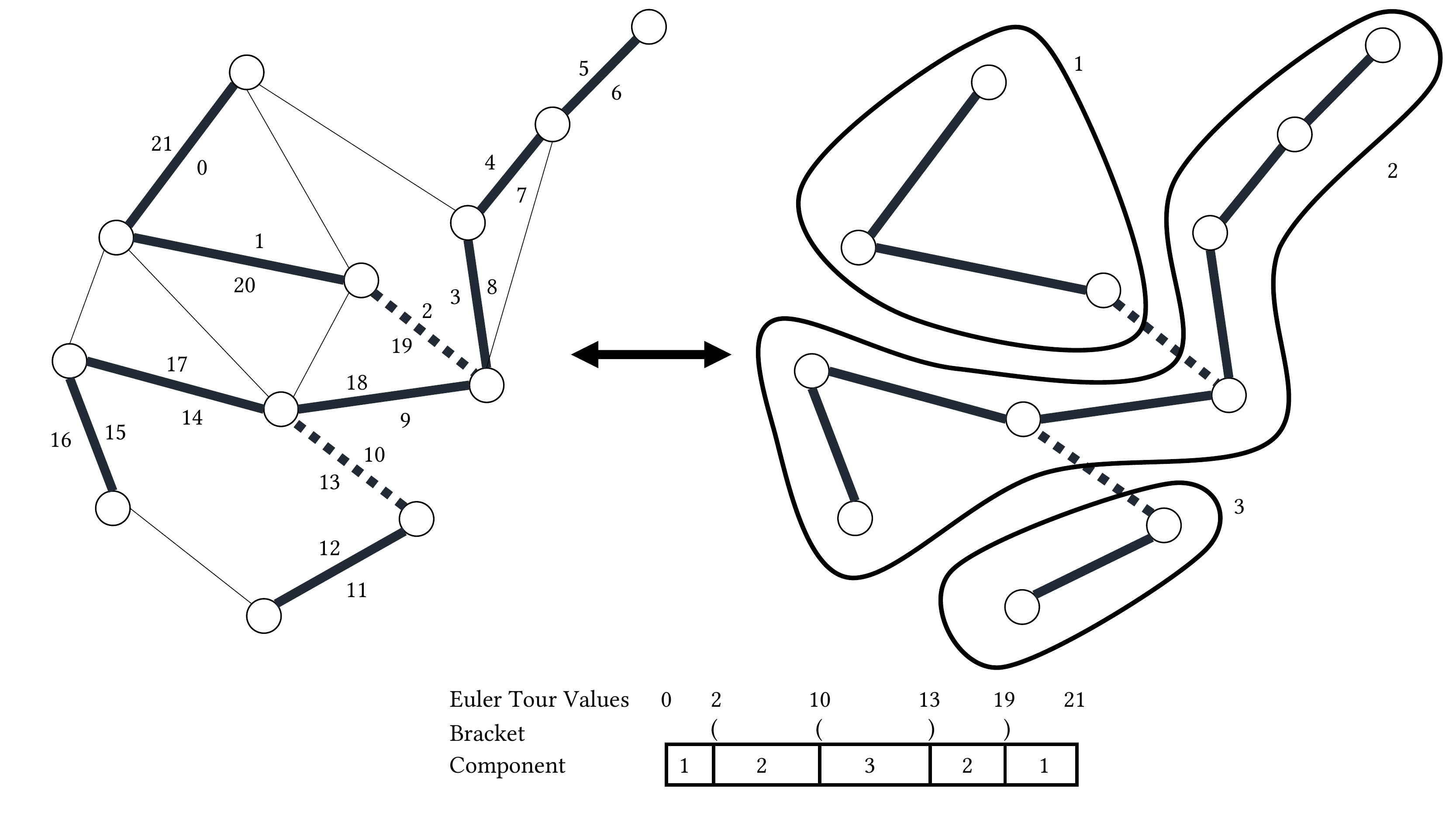}
    \caption{Example 3, Determining components using Euler Tour Values}
    \label{fig:componentsgraph}
\end{figure*}

\begin{enumerate}
	\item Broadcast all Euler tour values of $k$ edges being deleted and label disconnected components in Euler tour order.
	\item Determine which components each edge lies across. 
	\item Each machine does cycle deletion to obtain up to $k$ candidate edges.
	\item Apply Lenzen's routing theorem to sort the $k$ candidate lexicographically.
	\item Each machine keeps only the smallest weight edge across any two components.
	\item Each machine communicates with its two neighbouring machines (by index), to ensure that there are no duplicates.
	\item Use Lenzen's routing theorem to send all edges touching component $i$ to machine $i$.
	\item Run Jurdziński and Nowicki's MST algorithm.
\end{enumerate}

Steps 1 and 2 can be completed applying similar ideas to the single edge deletion case described in Section \ref{edgedeletionsection}. We construct equivalence classes as follows. Each machine first receives the Euler tour values of the $k$ edges being deleted, and lists them out in order. The smaller value of each pair of values is then represented with an open bracket, and the larger value of each pair is represented with a close bracket. All values that are contained in the same pair of brackets, and are at the same nested depth are in the same equivalence class. Each equivalence class then corresponds to the Euler tour values of a connected component. Components are labeled in order. Figure \ref{fig:reducedgraph} illustrates this process. 

Just as in Section \ref{edgedeletionsection}, we can determine which components each edge lies across with the neighbouring edge's Euler tour values. With the Euler tour values $e_{in}$ and $e_{out}$, we can determine which component the endpoint is in, by looking at where this value lies in the set of brackets determined above. In the event that the edge chosen is one of the boundary edge values(eg. 13 in figure \ref{fig:componentsgraph}), the direction of the edge is used to determine the side of the component it lies on. 

Once all the edges are labeled with the components they cut across, each machine can do cycle deletion on all of the edges $E\\E(M)$ that they have on their machines, to determine at most $k$ candidate edges that could possibly be in the new MST. 

\section{Lower Bounds}

For a lower bound, we demonstrate that it is not possible to complete much more than $O(k)$ queries in $O(1)$ rounds.

\begin{theorem}
For any constant $\delta$, there is a sequence of $3k$ batch updates, each of size $k^{1 + \delta}$, such that the total time required to complete these $3k$ batch updates is $\omega(k)$.
\end{theorem}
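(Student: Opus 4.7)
\emph{Plan.} The approach is to exhibit an adversarial sequence of $3k$ batches, each of $k^{1+\delta}$ updates, that forces a designated machine $M^{\ast}$ to absorb $\Omega(k^{2+\delta})$ bits of information. Since each of $M^{\ast}$'s $k$ incoming links carries only $O(\log n)$ bits per round, $M^{\ast}$ receives at most $O(k\log n)$ bits per round, so the whole sequence must take $\Omega(k^{1+\delta}/\log n) = \omega(k)$ rounds for any constant $\delta > 0$ under the standard assumption $n = \mathrm{poly}(k)$, i.e.\ $\log n = O(\log k)$.

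First I would fix a machine $M^{\ast}$ and, after the random vertex partition is revealed, pick a vertex $v^{\ast}$ hosted on it (such a vertex exists w.h.p.\ because $M^{\ast}$ hosts $\Theta(n/k)$ vertices). The initial graph is designed so that $v^{\ast}$ has many potential neighbors $u_1, u_2, \ldots$ distributed across the other $k-1$ machines, and the initial MST is arranged so that for each $u_j$ the maximum weight on the current MST-path $v^{\ast} \to u_j$ can be tuned independently by modifying ``dial'' edges that live entirely off $M^{\ast}$. Each of the $3k$ hard batches then consists of $k^{1+\delta}$ updates: a mixture of (i) edge-additions $(v^{\ast}, u_j)$ whose weights are picked so that their MST-membership is a-priori equiprobable, and (ii) re-randomizations of the off-$M^{\ast}$ dial edges, so that successive batches are conditionally independent from $M^{\ast}$'s viewpoint. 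Because $v^{\ast}$ is on $M^{\ast}$, the algorithm must finish each batch with $M^{\ast}$ knowing the in/out MST status of its newly-added edges, and by construction these answers carry $\Omega(k^{1+\delta})$ bits of entropy conditional on $M^{\ast}$'s prior transcript. A standard Yao/transcript argument then shows $M^{\ast}$ must receive $\Omega(k^{1+\delta})$ bits per batch, i.e.\ $\Omega(k^{2+\delta})$ bits over the full $3k$-batch sequence.

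The main obstacle will be guaranteeing that these $\Omega(k^{1+\delta})$ bits per batch really are fresh from $M^{\ast}$'s perspective, so that the per-batch lower bounds add across batches and cannot be amortized away by precomputation or clever data-structure maintenance. This is exactly why the batches themselves include fresh re-randomizations of the off-$M^{\ast}$ dial edges --- which fits inside the $k^{1+\delta}$ per-batch budget --- and why the answers must be information-theoretically independent of all bits $M^{\ast}$ has seen in earlier rounds. A secondary subtlety is that the random vertex partition is not under the adversary's control, but since any machine and any of its $\Theta(n/k)$ vertices suffice, this is handled by conditioning on the likely event that $M^{\ast}$ is nonempty and choosing $v^{\ast}$ adaptively after the partition is revealed.
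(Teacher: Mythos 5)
Your plan is an information-bottleneck argument at a single machine $M^{\ast}$ hosting a hub vertex $v^{\ast}$, which is the same philosophical skeleton as the paper's proof, but your gadget is genuinely different. The paper does not design a new hard instance: it reuses the Klauck et al.\ graph $G_b(X,Y)$ on $b+2 = k^{1+\delta/2}$ vertices (hubs $u,w$ with edge sets encoded by bit strings $X,Y$), first spends $k$ batches clearing out a vertex set, and then issues $k$ add/delete pairs, each pair inserting a fresh random copy of $G_b(X,Y)$ with globally minimal weights (so it must appear in the MST) and then removing it. Freshness across batches is free because each pair is an independently drawn instance that is erased before the next one, and the per-batch cost $\Omega(k^{\delta/2}/\log n)$ comes from the known conditional-entropy computation $H(Y\mid X)=2b/3$ versus $H(Y\mid X,E)\le H(1/3+2\zeta)b/2+o(b)$. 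Your construction instead encodes the hard bits in edge \emph{weights} (the ``dial'' edges) rather than in edge \emph{presence}, and re-randomizes them in place rather than deleting and re-inserting the whole instance. That buys a more self-contained argument and avoids the detour through the static MST lower bound, at the cost of having to establish the entropy bound yourself.

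Three points in your sketch are asserted where a full proof must do work. First, ``MST-membership is a-priori equiprobable'' for each added edge individually does not give joint entropy $\Omega(k^{1+\delta})$; you need the answers to be (nearly) jointly independent given $M^{\ast}$'s view, which requires the fundamental cycles of the added edges $(v^{\ast},u_j)$ to be edge-disjoint apart from $v^{\ast}$ --- e.g.\ arrange the MST as a star of paths rooted at $v^{\ast}$ with exactly one dial edge per path, that dial being the unique path maximum. You should state this construction explicitly; without it the claim is not justified. Second, you cannot guarantee the dial edges ``live entirely off $M^{\ast}$'': under the random vertex partition an expected $\Theta(1/k)$ fraction of them land on $M^{\ast}$, so you lose that fraction of the entropy and need a concentration bound, exactly as the paper does when bounding how many bits of $Y$ the machine hosting $u$ sees for free. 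Third, the cross-batch additivity needs the chain-rule argument spelled out (the batch-$i$ output has entropy $\Omega(k^{1+\delta})$ conditioned on $M^{\ast}$'s input and the transcript through batch $i-1$, hence the transcript must grow by that much), and your batches must also delete the previously inserted $(v^{\ast},u_j)$ edges before re-inserting them, since the model has no ``re-add an existing edge'' operation; this fits in the budget but should be accounted for. With those repairs the argument goes through and yields the same $\Omega(k^{1+\delta}/\log n)=\omega(k)$ conclusion under your stated assumption $\log n = O(\log k)$, which the paper also needs implicitly.
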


In~\cite{klauck15}, it was proven that the lower bound for the MST instance problem in the $k$-machine model is $\tilde{O}(n/k)$, and that the class of graphs which requires this time complexity is the following class of graphs $G_b(X,Y)$, where $X$ and $Y$ are two $b$ bit long binary string. The graph $G_b(X,Y)$ consists of $b+2$ vertices, denoted by $v_1, v_2,...v_b,u,w$. There is an edge from $u$ to $w$, and for each $ 1 \leq i \leq b$, there is an edge from $u$ to $v_i$ iff $X_i = 1$, and there is an edge from $w$ to $v_i$ iff $Y_i = 1$. There is also a guarantee that the graph is connected, and that for each $i$, $X_i \vee Y_i = 1$.

Importantly, this class of graphs has a number of edges linear in the number of vertices. 

The series of $3k$ batch updates is then as follows. We pick $k^{1 + \delta/2}$ vertices, and use the first $k$ batch updates to delete all edges that have both endpoints in this set of vertices, giving us an empty clique of size $k^{1 + \delta/2}$. The next $2k$ updates occur in pairs where we add in a random instance of the above kind, then delete it. When we add in the graph, we add it in with weights that are a global minimum. Since these new edges are all globally minimum, at the end of this batch of updates, this MST instance has to be included in the global MST, and each of these batch updates has to take $\Omega(k^{\delta/2}/\log{n})$ communication rounds, by the result in~\cite{klauck15}. This series of $k$ additions and deletions will then require at least $\Omega(k^{1 + \delta/3}/\log{n})$ communication rounds, which is $\omega(k)$. 

While for the proof in~\cite{klauck15}, $k$ was treated as a constant, and the result was with high probability in $n$, it is easy to verify that the entire proof still holds with high probability in $k$ when we set $n = k^{1 + \delta/2}$. We include a copy of the proof in the appendix. 

\section{MPC Model}

The above algorithm in the $k$-machine model maps over almost exactly to the MPC model.  The key issue to focus on is the space usage: in the $k$-machine model, we need $\Theta(m/k + \Delta)$ space on each machine. In the sublinear regime for the MPC model, we are not able to store all the edges of a high degree vertex on a single machine.  To adjust from the $k$-machine model to the MPC model, we have to shift from a vertex partitioning model to an edge partitioning model, which we can do since the MPC model allows for information to be arbitrarily reorganized.  The number of queries we can resolve also scales differently, as the communication bandwidths for the $k$-machine model and the MPC model scale differently.

\begin{theorem}
In the MPC model with $k$ machines and $S = \Theta(n^\alpha)$ space on each machine for some constant $\alpha$, such that $kS = \tilde{\Theta}(m)$, there is a dynamic MST algorithm that can satisfy $S$ dynamic edge updates in $O(1)$ communication rounds while using $S$ space, at most constant factor more space over the original space necessary to store the graph $G$. The algorithm is deterministic worst case $O(1)$ in the edge addition case, and a Las Vegas style algorithm for the edge deletion case, with it being $O(1)$ with high probability for a success in each attempt. The data structure required can be initialised in $O(\log{n})$ rounds.
\end{theorem}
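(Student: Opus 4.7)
The plan is to mirror the $k$-machine algorithm almost verbatim, using the fact that Lenzen's routing and the broadcast primitives of Lemma \ref{reroutinglem} carry over to the MPC model via the standard simulation, and then to re-examine the two places where the models genuinely differ: space layout and per-round bandwidth. Concretely, I would first replace the random vertex partition by an edge partition. Because the MPC model allows arbitrary reshuffling of data in a single communication round, I can pre-process the input so that each machine stores $\Theta(S)$ edges, together with, for each endpoint $v$ of one of its edges, a single Euler tour representative edge of $v$ (as in the $k$-machine data structure). The total space for these representatives is still $O(S)$ per machine, since every representative is attached to an edge already stored on that machine.

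Next I would audit the updates. Edge additions in Section 5 use only: (i) Lemma \ref{lem4} (rerooting by global additive shift), (ii) sets $A$ and $B$ of size $O(k)$ of ``special'' vertices, (iii) $O(k)$ broadcasts of Euler tour values, and (iv) $O(k)$ max queries. The structural decomposition lemma is purely graph theoretic, so it applies unchanged if I substitute the batch size $k$ with $S$: $|A|+|B|=O(S)$ and the contracted graph $M''$ has $O(S)$ edges and thus fits on one machine. The communication budget per machine in MPC is $\Theta(S)$ per round, so all the ``$O(k)$ broadcasts in $O(1)$ rounds'' statements become ``$O(S)$ broadcasts in $O(1)$ rounds'' after invoking Lenzen. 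Edge deletions work the same way: the bracket-matching component labelling, per-machine cycle deletion, Lenzen routing to deduplicate inter-component candidates, and the final invocation of Jurdziński--Nowicki all scale with $S$ rather than $k$. The $k$-way merge (Lemma for $k$-way merging) becomes an $S$-way merge by the same argument, since each merge/split operation requires only $O(1)$ broadcast-equivalents and there are $O(S)$ of them.

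The one genuinely different piece is initialisation. In the $k$-machine model I used the Borůvka-style algorithm with the rerouting lemma to get $O(n/k+\log n)$ rounds, but in MPC with $S=\Theta(n^\alpha)$ the analogue is the $O(1)$-round Jurdziński--Nowicki MST algorithm (or, more generally, any $\tilde O(1)$ connected-components routine in this regime); running the $S$-way Euler tour merge on top of the produced MST then sets up the data structure. Because the MST itself is computed in $O(\log n)$ rounds in the sublinear MPC regime and the Euler-tour initialisation adds only $O(1)$ further rounds (it is just one invocation of the $S$-way merge), the total initialisation cost is $O(\log n)$ as claimed.

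The main obstacle I expect is a bookkeeping issue rather than a conceptual one: ensuring that after the edge partition, every broadcast/max/min query the $k$-machine proof performs still has its operands addressable in $O(S)$ space on a single machine. Specifically, the ``for each neighbour, store one Euler-tour edge'' invariant was easy under vertex partition but requires a short argument under edge partition, since a high-degree vertex $v$ now lives on many machines. The fix is to designate one machine as owner of each vertex's representative and route any query about $v$ to that machine via Lenzen in $O(1)$ rounds; since there are $O(S)$ such queries per batch, this respects the per-round bandwidth. Once this invariant is established, every lemma from Sections 4 and 5 goes through with $k$ replaced by $S$, proving the theorem.
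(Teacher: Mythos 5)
There is a genuine gap, and it is exactly at the one place the paper flags as the crucial difference between the two models: initialisation. You claim that after computing the MST in $O(\log n)$ rounds, ``running the $S$-way Euler tour merge on top of the produced MST \ldots adds only $O(1)$ further rounds (it is just one invocation of the $S$-way merge).'' But the $S$-way merge lemma processes $S$ independent edge additions per invocation, and the MST has $n-1$ edges with $n-1 \gg S = \Theta(n^\alpha)$ in the sublinear regime. Building the Euler tour over the whole tree this way requires $\Theta(n/S)$ invocations, i.e.\ $\Theta(n/S)$ rounds, not $O(1)$ --- each merge needs a broadcast of Euler tour values and component sizes, every machine can only receive $O(S)$ words per round, so $\Theta(n)$ broadcasts cannot be compressed below $\Theta(n/S)$ rounds. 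This $O(n/S)$ bound is precisely what the paper identifies as too slow and then works to circumvent; your proposal inherits it rather than avoiding it.

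The paper's actual fix is the real technical content of the MPC section and is absent from your proposal: the Euler tour is built \emph{during} a modified Bor\r{u}vka process rather than after it. In each phase the chosen minimum outgoing edges form an oriented forest; a Cole--Vishkin $3$-colouring extracts a colour class whose chosen edges form a collection of vertex-disjoint stars (no path of length $3$), still contracting a constant fraction of components so that $O(\log n)$ phases suffice. Each star merge --- which can have arbitrary arity, unlike the pairwise merges of Lemma~\ref{lem6} --- is then done in $O(1)$ rounds by a converge-cast of component sizes to the leader machine of the star's centre, which computes and returns each satellite's displacement in the new Euler tour. Without this (or some equivalent mechanism for merging unboundedly many components into one Euler tour in $O(1)$ rounds while avoiding chain dependencies), the claimed $O(\log n)$ initialisation does not follow. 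The rest of your proposal --- the edge partition, the per-edge storage of neighbour representatives, the observation that the structural decomposition and the update procedures scale with $S$ in place of $k$, and the leader-machine routing for vertex-local queries --- matches the paper and is fine.
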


We modify some parts of the $k$-machine algorithm to guarantee that the space requirements are satisfied, and follow an edge partitioning model. Each machine stores a set of edges of the graph. We however do not completely disregard the vertex partitioning model. To make it easier to complete certain vertex operations, we duplicate all edges, and store the edges on the machines lexicographically, so that any vertex is on a contiguous set of machines.

Some adjustments to the data structure have to be made to satisfy the edge partitioning model. Instead of storing the Euler tour information of a single arbitrary edge for each neighbour, we move this information onto each edge instead. For any edge $(u,v)$, we additionally store an arbitrary Euler tour edge of $u$ and an arbitrary Euler tour edge of $v$. 

A crucial difference here is in the initialisation process. Applying the initialisation argument in the $k$-machine model gives us an initialisation time of $O(n/S)$ rounds. However, MSTs in the MPC model can be solved in $O(\log{n})$ rounds in general, much faster than this initialisation time. 

To initialise the Euler tour data structure in $O(\log{n})$ rounds, we use a modified version of the Borůvka style component merging algorithm. The primary obstacle is to ensure that the edges we choose to merge do not create dependencies. Merging two components is the same as in the $k$-machine case, but in the MPC model, we can also merge stars. For any component $x$, and an arbitrary number of components connected to this component $x$ we can merge them in $O(1)$ rounds. We describe this merging process later.

To determine the stars (these do not have to include all neighbours of the central vertex) that are to be merged, we do the following. In one iteration of Borůvka's algorithm, we determine the minimum outgoing edge from each component. This set of edges forms a forest $F$. We orient the edges in this forest $F$ by orienting the edges along the minimum outgoing edge directions, with edges pointing towards each other determined by vertex id. Then, we apply the Cole-Vishkin coloring algorithm~\cite{cole-vishkin} on this oriented tree to get a 3-coloring of the forest $F$. 

Now, WLOG, let $a$ be the most frequently appearing color in this coloring of $F$. Each component colored with $a$ picks its minimum outgoing edge, and merges through this edge.

The resulting set of chosen edges cannot have any paths of length 3, and is hence a collection of stars. There are $\Theta(n)$ such edges too, resulting in $O(\log{n})$ Borůvka steps in total. What remains is to demonstrate that each round of merging can be completed in $O(1)$ rounds.

We sort the components lexicographically. For each component, we call the first machine that holds that component the leader machine for that component. From the previous step, we have obtained some collection of stars $\{S_i\}$, with the centre of each star $S_i$ being the component $s_i$. Now, each $s_i$ is unaware of the components it is supposed to merge with, but its leader node can obtain the vertices it is supposed to merge with through a converge-cast. 

Importantly here, what allows us to complete the converge-cast successfully for an arbitrary number of components, despite the leader node having only $O(S)$ bandwidth, is the nature of the Euler tour values required. We illustrate this process. Suppose some component wishes to merge with $S_1$ through the edge $(u,v)$, with $u \in S_1$. Notice that to complete the merge, each component merging with $u$ only needs to know its displacement in the Euler tour. The machine hosting the edge $(u,v)$ on the $v$ side sends the size of the component on the $v$ side to the machine hosting the edge $(u,v)$ on the $u$ side. The machine hosting the edge $(u,v)$ sums the sizes, for the converge-cast towards the leader node. After receiving the total sizes, the leader node can calculate the required displacements in the Euler tour, and send back the correct values. 


In the $k$-machine model algorithm, we extensively use broadcast and converge-cast steps. In the MPC model, it is easy to see that $O(S)$ broadcasts and converge-casts can be completed in $O(1)$ rounds using broadcast and converge-cast trees\cite{ghaffari_choo}. This is because $S = n^{\alpha}$ for some constant $\alpha$, and these trees grow by a factor of $n^\alpha$ each round, so these broadcasts take $O(1/\alpha)$ rounds. 

There are only two places in our algorithm where we make use of the fact that a single machine holds all the information about a vertex. We check that it is fine in both cases: 
\begin{itemize}
    \item For $k$-way merging, broadcasting an outgoing edge's Euler tour value from each endpoint of an added edge can be done by the leader machine for that node. 
    \item For the edge addition case, vertices verifying that they are indeed in $B$ is a simple degree check, which can be completed in a single round by sending the leader machine for that vertex the number of edges that are in $B$. 
\end{itemize}

We highlight a section of interest. In the edge deletion case, we reduce to solving an MST instance of size $S$. While solving the MST instance in $O(1)$ rounds in the sublinear regime is currently an open problem, notice that our batch size scaling to our bandwidth guarantees that we are always in the linear regime, which has been solved.


\section{Conclusion}

In this paper, we have explored how fast a cluster computing environment can maintain a minimum spanning tree subject to a sequence of updates.  Essentially, it comes down to the communication bandwidth.  In the $k$-machine model, we can handle $O(k)$ edge updates in $O(1)$ rounds.  In the MPC model where each machine has space $S$, we can handle $O(S)$ edge updates in $O(1)$ rounds.  (Of note, our contributions do not involve sketching techniques, as is common in earlier approaches, although the MST subroutine we use for the deletion case does.) We also demonstrate a lower bound for the $k$-machine model, showing that it is not possible for an algorithm to resolve $k^{1 + \epsilon}$ queries in $O(1)$ communication rounds.  One observation is that the Euler tour data structure is especially useful in the context of dynamic MST in this distributed setting. 


Future directions include expanding the approach to the problem of Steiner trees in the $k$-machine model, a structure very similar to minimum spanning trees.  Alternatively, under a different set of restrictions~\cite{pandurangan18}, it is possible to construct an MST in the $k$-machine model faster, in $O(n/k^2)$ communication rounds.  We wonder if it also possible to achieve this in the dynamic situation, obtaining $O(k^2)$ updates in $O(1)$ rounds. (In this case, only one endpoint knows that an edge is in the MST.  Surprisingly, this allows~\cite{pandurangan18} to beat the $\Omega(n/k)$ lower bound.)   We also would like to explore whether the approaches described here translate well into other distributed models. 

\subsubsection*{Acknowledgments} Thanks to Michael Bender and Martin Farach-Colton for conversations about data stream processing. Thanks to Faith Ellen for useful feedback.

\bibliographystyle{ACM-Reference-Format}
\bibliography{bibliography}

\appendix

\section{Omitted proofs}
\subsection{Rerouting Lemma}
\begin{lemma}
Any algorithm in the $k$-machine model that performs a total of $B$ broadcasts in $R$ sets, with the broadcasts within each set having no dependencies, can be completed in a total of $O(B/k + R)$ rounds. 
\end{lemma}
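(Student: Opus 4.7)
The plan is to reduce the claim to a per-set bound and then sum. It suffices to show that any single set of $b$ mutually independent broadcasts can be completed in $O(b/k + 1)$ rounds, since summing over the $R$ sets would then give $\sum_{r=1}^{R} O(b_r/k + 1) = O(B/k + R)$, which is exactly the claim.

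For a single set of $b$ independent broadcasts, I would use a two-phase scheme. In Phase 1 (rebalancing) I assign each of the $b$ messages a global rank using a canonical order, for example lexicographic on (source id, local index), and route the message of rank $p$ to machine $\lceil pk/b \rceil$. Under this target assignment every destination is the recipient of at most $\lceil b/k \rceil$ messages. A source holding $b_j \leq k$ messages dispatches them via a single application of Lenzen's routing lemma in $O(1)$ rounds; a source with $b_j > k$ partitions its load into $\lceil b_j/k \rceil$ batches of size at most $k$ and invokes Lenzen once per batch, and within each batch the per-destination load remains at most $k$, so Lenzen applies. The number of batches required is $\max_j \lceil b_j/k \rceil \leq \lceil b/k \rceil$, so Phase 1 finishes in $O(b/k + 1)$ rounds. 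In Phase 2 (parallel broadcasting) every machine holds at most $\lceil b/k \rceil$ values and performs one genuine broadcast per round over all $k-1$ of its outgoing links. Each receiver thus accepts one message from each of the other $k-1$ machines per round, exactly saturating its incoming bandwidth, and after $\lceil b/k \rceil$ rounds every assigned value has been broadcast to every destination. So Phase 2 also uses $O(b/k + 1)$ rounds, giving the per-set bound.

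The main obstacle is keeping the destination loads tight in Phase 1. The obvious round-robin scheme---source $j$ cycles its $b_j$ messages across the other $k-1$ machines---delivers $\lceil b_j/(k-1) \rceil$ messages to every destination, so the destination loads grow as $b/k + O(k)$. When $b < k^2$ the additive $O(k)$ slack dominates and blows Phase 2 up to $\Theta(k)$ rounds instead of $O(1)$. Fixing the destinations in advance via the canonical global ordering, and then invoking Lenzen's routing lemma as the per-batch subroutine, structurally caps the destination load at $\lceil b/k \rceil$ and yields the clean deterministic $O(b/k + 1)$ bound per set. This is also the mechanism by which the argument avoids the $O(\log n)$ factor incurred by the randomized conversion theorem in Klauck et al.~\cite{klauck15}.
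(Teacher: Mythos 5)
Your overall plan---reduce to a per-set bound of $O(b/k+1)$, impose a canonical global order on the messages, rebalance so that every machine ends up responsible for about $b/k$ of them, then broadcast in parallel---is exactly the paper's strategy. However, your Phase 1 has a concrete gap in the step ``within each batch the per-destination load remains at most $k$, so Lenzen applies.'' Because you batch \emph{per source} (source $j$ splits its own $b_j$ messages into blocks of $k$), the messages sent in a single synchronized batch across all sources can pile up at one destination. Each destination $d$ is the target of all ranks in an interval of length $b/k$, and when $b>k^2$ that interval can absorb the entire first batch of many sources at once: take $k-1$ sources each holding exactly $k$ messages whose global ranks all fall in destination $1$'s interval (possible once $b/k \geq (k-1)k$, e.g.\ $b=\Theta(k^3)$, which is the relevant regime since the paper invokes this lemma with $B=\Theta(n)$ broadcasts). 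In their single, concurrent batch these sources deliver $\Theta(k^2)$ messages to machine $1$, violating the destination-side precondition of Lenzen's routing lemma; serializing the sources instead inflates the round count to $\sum_j \lceil b_j/k\rceil$, which can be $\Theta(k)$ even when $b=k$.

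The fix is to batch by \emph{global} rank rather than per source: let batch $i$ consist of the $k$ messages of ranks $ik+1,\dots,ik+k$ and send the $j$-th of them to machine $j$. Then each batch contains exactly $k$ messages with $k$ distinct destinations, every source sends at most one message per link, and every destination receives exactly one message---so each batch is routable directly in one round, with no need for Lenzen at all, followed by one round in which every machine broadcasts the single message it just received. This is precisely the paper's two-round-per-iteration scheme, and it yields the clean deterministic $O(b_r/k+1)$ bound per set (plus the one initial round in which machines announce their counts so that the global order is known), summing to $O(B/k+R)$.
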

\begin{proof}
We can use a re-routing strategy to resolve this problem. Suppose machine $i$ has to complete $C_{i,r}$ broadcasts during round $r$. If we naively complete all the broadcasts, we will require a total of $\sum_{r = 1}^{R}\max_i\{C_{i,r}\}$ communication rounds.

Notice that in the event where one machine $i$ has to complete significantly more broadcasts than the other machines, a rerouting strategy is useful. Instead of machine $i$ broadcasting information, it instead sends $k$ sets of different information to each of the other machines, and those machines can broadcast the information instead. If $B_r = \sum_{i = 1}^kC_{i,r}$ total broadcasts are to be completed in a set, we show that this can in fact be achieved in $O(B_i/k)$ rounds. The algorithm proceeds as follows:

\begin{enumerate}
    \item Each machine broadcasts the number of broadcasts it has to do in this set to each other machine.
    \item The messages to be broadcast are globally ordered, by machine number, then by message number. Repeat the following two round procedure $B_i/k$ times. During iteration $i$:
    \begin{enumerate}
        \item Message ordered $i*k + j$ is sent to machine $j$ from the source machine.
        \item Each machine broadcasts the message it received.
    \end{enumerate}
\end{enumerate}

Notice that step 1 is essential to the success of this algorithm, since it guarantees that no two messages will be sent to the same machine in step 2a). The ordering within each machine does not need to be known by all machines, but the number of messages on other machines that have priority over it does. \end{proof}

Importantly, this strategy also applies to converge-casts. In particular, this rerouting and re-balancing strategy also works for subroutines such as a max computation, where each machine produces a value, and a global maximum is desired.

Suppose machine $i$ needs to know the maximum of these values, but is also caught up with doing several broadcasts of its own. It can reroute this max computation to any other machine $j$, and have all machines send $j$ this information instead. The process occurs as follows:

\begin{enumerate}
    \item Machine $i$ tells machine $j$ that it requires the max computation
    \item Machine $j$ broadcasts to all other machines, requesting for this information, using up the communication edges of $j$ for $O(1)$ rounds.
    \item All machines send this information to machine $j$, using up the communication edges of $J$ for another $O(1)$ rounds.
    \item Machine $j$ then sends this information back to machine $i$. 
\end{enumerate}

This completes the converge-cast. This gives us the stronger lemma:

\begin{lemma}\label{reroutinglem}
Any algorithm in the $k$-machine model that performs a total of $B$ broadcasts and/or max computations in $R$ sets, with the broadcasts and computations within each set having no dependencies, can be completed in a total of $O(B/k + R)$ rounds. 
\end{lemma}

This lemma also implies that the MST construction problem can be solved in $O(n/k + \log{n})$ rounds by simulating the Boruvka style component merging algorithm, instead of the $\tilde{O}(n/k)$ rounds as described in~\cite{klauck15}. 

\subsection{MST algorithm}
\begin{theorem}
    We can construct an MST in the $k$-machine model in $O(n/k + \log{n})$ communication rounds.
\end{theorem}

\begin{proof} 
We begin with each vertex being its own component. In each phase, we take each component and find the minimum outgoing edge, and add it to the MST, merging the two components. Finding the minimum outgoing edge is essentially a single min-query, and the merging of two components can be done in a single broadcast, to update the component names. 

After each phase, the number of components decreases by at least a factor of two, there are at most $\log{n}$ phases. The total number of min-queries across these $\log{n}$ phases is $O(n)$, since we have it bounded by $\sum_{i = 0}^{\log{n}}\frac{n}{2^i} = O(n)$ minimum outgoing edge queries. The total number of merges is $n$, so the algorithm requires a total of $O(n)$ broadcasts and min-queries, and can be completed in $O(n/k + \log{n})$ rounds applying our rerouting lemma \ref{reroutinglem}. 
\end{proof}

\subsection{Lower bound theorem proof}
Here, we replicate the proof in~\cite{klauck15} that at least $\Omega(k^{1 + \delta/3}/\log{n})$ communication rounds are required to determine an MST of with $k^{1 + \delta/2}$ vertices.

\begin{theorem}
Every public-coin $\epsilon$-error randomized protocol on a $k$-machine network, sending $\log{n}$ bits per round, that computes a spanning tree of a $k^{1 + \delta/2}$-node input graph has an expected round complexity of $\Omega(k^{\delta/2}/\log{n})$
\end{theorem}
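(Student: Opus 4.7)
The plan is to replicate the Klauck et al. lower bound for spanning tree construction in the $k$-machine model, and to verify that its probabilistic analysis carries over with high probability in $k$ once we set $n = k^{1+\delta/2}$. The heart of the argument is a reduction from 2-party set disjointness on $b$-bit inputs, which has randomized public-coin communication complexity $\Omega(b)$ even for constant error.

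Setting $b = k^{1+\delta/2} - 2$, I will let Alice hold $X \in \{0,1\}^b$ and Bob hold $Y \in \{0,1\}^b$. They jointly instantiate the hard graph $G_b(X,Y)$ described in the preamble (the extra guarantee that $X_i \vee Y_i = 1$ can be enforced by having Alice and Bob agree on a minimal default in advance; alternatively one reduces from promised disjointness on inputs satisfying this). Edge weights are chosen so that the MST's choice of edge incident to each $v_i$ reveals whether $X_i \wedge Y_i = 1$; in particular, recovering the MST suffices to decide disjointness. Using shared public coins, Alice and Bob then sample the $k$-machine random vertex partition. After conditioning on the constant-probability event that $u$ lands on a designated Alice-machine $M_u$ and $w$ on a designated Bob-machine $M_w$, Alice will simulate $M_u$ and Bob the remaining $k-1$ machines.

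In each simulated round, the cross-cut communication is at most $(k-1)\log n$ bits, since Alice's single machine $M_u$ has only $k-1$ channels, each of bandwidth $O(\log n)$ per round. So over $R$ rounds, the total Alice--Bob communication is $O(R k \log n)$ bits. Combined with the disjointness lower bound $\Omega(b) = \Omega(k^{1+\delta/2})$, this forces $R \cdot k \log n = \Omega(k^{1+\delta/2})$, which rearranges to $R = \Omega(k^{\delta/2}/\log n)$, as claimed.

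The main obstacle is the initial-state exchange: each Bob-side machine hosting some $v_i$ would naively need to be told $X_i$ so it can store the edge $(u,v_i)$ locally, and sending all such $X_i$'s up front would require $\Theta(b)$ bits and make the bound trivial. Following Klauck et al., I avoid this by a lazy simulation: $X_i$ is revealed to Bob only when the algorithm first causes $v_i$'s machine to communicate about that edge, and each such reveal is charged to the $O(k \log n)$ cross-cut budget of the corresponding round. A secondary technical item is the Chernoff concentration of the random vertex partition, originally written in Klauck et al.\ as w.h.p.\ in $n$; since $n = k^{1+\delta/2}$ is polynomially large in $k$, every failure probability of the form $1/n^c$ is already $1/k^{c(1+\delta/2)}$, and each concentration bound translates directly to w.h.p.\ in $k$ without modification.
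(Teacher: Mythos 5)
Your high-level framing (hard graph $G_b(X,Y)$, bottleneck at the machine hosting $u$, bandwidth $O(k\log n)$ per round, Chernoff bounds transferring to w.h.p.\ in $k$ once $n=k^{1+\delta/2}$) matches the paper, but the core of your argument --- a two-party reduction from set disjointness --- is not what the paper (or Klauck et al.) does, and as written it does not work. The first problem is the cut. In the random-vertex-partition model the machine hosting $v_i$ stores \emph{all} of $v_i$'s edges, hence it knows both $X_i$ and $Y_i$ from its local input. If Alice simulates only $M_u$ and Bob simulates everything else, then Bob's side already contains $X\wedge Y$ at time zero and can decide disjointness with no cross-cut communication, so no lower bound follows. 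The ``lazy simulation'' does not rescue this: the $v_i$-machines' behaviour, including the messages they exchange entirely within Bob's side, can depend on $X_i$ from round one, and those messages are not chargeable to the cross-cut budget. The second problem is the target function: under the connectivity promise $X_i\vee Y_i=1$, disjointness collapses to testing whether $Y=\bar X$, i.e.\ an equality-type problem with $O(1)$ public-coin randomized complexity, so there is no $\Omega(b)$ bound to invoke. Third, the theorem is about an arbitrary \emph{spanning tree}, not an MST with adversarial weights: the host of each $v_i$ can locally resolve the cycle $u\text{--}v_i\text{--}w$ (say, keep $(u,v_i)$ iff $X_i=1$), producing a valid spanning tree whose restriction to $u$'s edges is already known to $M_u$; the tree simply is not forced to encode $X\wedge Y$.

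What actually makes the problem hard --- and what the paper's proof captures --- is that the \emph{output requirement} is two-sided: $p_1$ must learn which of $u$'s edges are in the tree and $p_2$ must learn which of $w$'s edges are, and for every valid tree these two outputs jointly pin down $Y$ given $X$ up to few possibilities. The paper therefore runs an entropy argument: $H(Y\mid X)=2b/3$ initially, the random partition leaks only $O(b/k)$ bits to $p_1$, one of $p_1,p_2$ outputs at most $b/2$ edges, and a counting bound shows $H(Y\mid X,E)\le H(1/3+2\zeta)b/2+o(b)$, forcing $\Omega(b)$ bits into that machine over its $k$ channels. If you want to keep a reduction-flavoured proof you would have to replace disjointness with a problem whose answer is determined by $p_1$'s output alone and whose input is not visible on Bob's side, which in effect reproduces the entropy argument; as it stands, the reduction step is a genuine gap.
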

\begin{proof} 
Let $b = k^{1 + \delta/2} - 2$. The class of graphs is  $G_b(X,Y)$, where $X$ and $Y$ are two $b$ bit long binary string. The graph $G_b(X,Y)$ consists of $b+2$ vertices, denoted by $v_1, v_2,...v_b,u,w$. There is an edge from $u$ to $w$, and for each $ 1 \leq i \leq b$, there is an edge from $u$ to $v_i$ iff $X_i = 1$, and there is an edge from $w$ to $v_i$ iff $Y_i = 1$. There is also a guarantee that the graph is connected, so that for each $i$, $X_i \vee Y_i = 1$. The total number of graphs in this class of graphs is $3^b$. 

With probability $1 - 1/k$, the vertices $u$ and $w$ are on different machines., say $p_1$ and $p_2$. To guarantee that the output is a spanning tree, the machines hosting $u$ and $w$ have to figure out which of the edges to use in the spanning tree. The proof will demonstrate that to accomplish this, there has to be a large amount of information flow. Specifically, the proof demonstrates that the conditional entropy has to change by a large amount. 

Before any communications occur, the conditional entropy $H(Y | X)$ is $2b/3$:
\begin{align*}
H(Y|X) &= \sum_xPr(X = x)\cdot H(Y | X = x)\\
&= 3^{-b}\sum_{l = 0}^b {b \choose l}2^l\cdot\log{2^l}\\
&= 3^{-b}b\sum{l = 0}^{b-1} {b - 1 \choose l} 2^{l + 1}\\
&= 2b/3
\end{align*}

Since the $k$-machine model employs the random vertex partition model, the machine hosting $u$, $p_1$ knows not only $X$, but also some vertices of $v_i$ and their edges, giving it some bits of $Y$. Let $A$ be the random variable denoting the amount of information that $p_1$ has. Employing a Chernoff bound, we can see that $p_1$ knows at most $(1 + \zeta)b/k$ bits of $Y$ for some small constant $\zeta$ with probability $1 - 2^{\zeta^2b/3k} \geq 1 - 2^{\zeta^2k^{\delta/2}/3}$. This error probability is exponentially small in $k$. In this error situation, at most $b$ bits of entropy can be lost, giving us a total reduction in entropy less than $(1 + \zeta)b/k + o(1)$. Hence, we have that $H(Y | A) \geq 2b/3 - (1 + \zeta)b/k - o(1)$. 

We now calculate the entropy at the end of the algorithm. With probability $1 - \epsilon$, the algorithm succeeds in producing a spanning tree. One of $p_1$ or $p_2$ will output at most $b/2$ edges after the algorithm ends. WLOG, let this be $p_1$. Let $E$ be the random variable of edges in the output of $p_1$, and let $T_0$ be the transcript of all messages to $p_1$. Now, we have that $H(Y | A, T_0) \leq H(Y | X, E)$, since we can simulate the algorithm and calculate both $A$ and $T_0$ from $X$ and $E$. 

We now estimate $H(Y | X, E)$. Again, we can use a Chernoff bound to obtain that $Y \leq 2b/3 + \zeta b$ with error probability exponential in $k$. Now, since $p_1$ outputs at most $b/2$ edges, at least $b/2$ edges in $Y$ have to be known from $E$. This gives us $l < 2b/3 + \zeta b - b/2 = b/6 + \zeta b$ edges that are free. These edges in $Y$ that are unknown have to correspond to edges in $X$ that have been chosen to be in the spanning tree, so there are at most $b/2$ such edges. This gives us at most
\begin{align*}
\sum_{l < b/6 + \zeta b}{b/2 \choose l} \leq b \cdot {b/2 \choose b/6 + \zeta b}
\end{align*}
possibilities for $Y$. This gives us the remaining entropy to be:
\begin{align*}
H(Y | X, E) &\leq Pr( |Y| < 2b/3 + \zeta b)(\log{{b/2 \choose b/6 + \zeta b}} + \log {b}) + o(1)\\
&\leq H(1/3 + 2\zeta)b/2 + o(b)
\end{align*}

Now, this gives us that $H(T_0 : Y | A) = H(Y | A) - H(Y|A, T_0) \geq 2b/3 - (1 + \zeta)b/k - o(1)  - H(1/3 + 2\zeta)b/2 \geq \Omega(b) - o(b)$, and hence $p_1$ has to have received messages of size $\Omega(b) = \Omega(k^{1 + \delta/2})$. Given that there are $k$ channels, with $\log{n}$ bits per channel, this gives us the desired result. 

$u$ and $w$ are on the same machine with probability $1 - 1/k$, but $(1 - 1/k)k^{1 + \delta/2} = \Omega(k^{1 + \delta/3})$, and we are done. \end{proof}

\end{document}